\DeclareMathOperator*{\argmax}{arg\,max}
\DeclareMathOperator*{\diag}{diag}
\DeclareMathOperator{\vect}{vec}
\renewcommand{\vec}[1]{\mathbf{#1}}
\newcommand{\mat}[1]{\mathbf{#1}}
\newcommand{\abs}[1]{\lvert #1 \rvert}
\newcommand{\Var}{\mathrm{Var}}
\newcommand{\qedsymbol}{\rightline{$\square$}}
\begin{document}

\title{A fast Monte Carlo algorithm for evaluating matrix functions with application in complex networks}
\titlerunning{Fast Monte Carlo algorithm for evaluating matrix functions}
\author{Nicolas L. Guidotti \and Juan A. Acebr\'on \and Jos\'e Monteiro}

\institute{Nicolas L. Guidotti \at INESC-ID, Instituto Superior T\'ecnico, Universidade de Lisboa, Portugal, \\\email{nicolas.guidotti@tecnico.ulisboa.pt}  \and
Juan A. Acebr\'on \at Department of Mathematics, Carlos III University of Madrid, Spain, and \\
INESC-ID, Instituto Superior T\'ecnico, Universidade de Lisboa, Portugal,
\\\email{juan.acebron@ist.utl.pt} \and
Jos\'e Monteiro \at INESC-ID, Instituto Superior T\'ecnico, Universidade de Lisboa, Portugal, \\\email{ jcm@inesc-id.pt}}

\date{Received: date / Accepted: date}

\maketitle

%%%%% Abstract %%%%%

\begin{abstract}
 We propose a novel stochastic algorithm that randomly samples entire rows and columns of the matrix as a way to approximate an arbitrary matrix function using the power series expansion. This contrasts with existing Monte Carlo methods, which only work with one entry at a time, resulting in a significantly better convergence rate than the original approach. To assess the applicability of our method, we compute the subgraph centrality and total communicability of several large networks. In all benchmarks analyzed so far, the performance of our method was significantly superior to the competition, being able to scale up to 64 CPU cores with remarkable efficiency.

 \keywords{Monte Carlo methods \and randomized algorithms \and matrix functions \and network analysis}
\subclass{65C05 \and 68W20 \and 65F60 \and 05C90}

\begin{acknowledgements}
This work was supported by national funds through FCT, Fundação para a Ciência e a Tecnologia, under projects URA-HPC PTDC/08838/2022, UIDB/50021/2020 (DOI:10.54499/UIDB/50021/2020) and PTDC/CCI-INF/6762/2020 and grant 2022.11506.BD.
JA was funded by  Ministerio de Universidades and specifically the requalification program of the Spanish University System 2021-2023 at the Carlos III University.
\end{acknowledgements}

\end{abstract}

\clearpage

%%%%% Main Text %%%%%

\section{Introduction}\label{sec:intro}
Complex networks have become an essential tool in many scientific areas~\cite{estrada_structure_2012,newman_networks_2018,newman_structure_2003} for studying the interaction between different elements within a system. From these networks, it is possible, for instance, to identify the most important elements in the system, such as the key proteins in a protein interaction network~\cite{estrada_virtual_2006}, the keystone species in an ecological network~\cite{jordan_quantifying_2007}, vulnerable infrastructures~\cite{albert_error_2000} or the least resilient nodes in a transportation network~\cite{aparicio_lines_2022}.

Measures of node importance are referred to as \textit{node centrality} and many metrics have been proposed over the years \cite{estrada_structure_2012,newman_networks_2018}. Here, we consider the family of centrality measures defined in terms of matrix functions \cite{estrada_network_2010}, which classify the nodes according to how well they can spread information to other nodes in the network. Both the Katz centrality~\cite{katz_new_1953} and subgraph centrality~\cite{estrada_subgraph_2005} belong to this family. In most cases, the node centrality is computed based on the matrix resolvent $(\mat{I} - \gamma\mat{A})^{-1}$ and the exponential $\exp{(\mat{A})}$, but other functions~\cite{arrigo_ml_network_2021,benzi_ranking_2013} can be used as well,
with $\mat{A} \in \mathds{R}^{n \times n}$ denoting the network's adjacency matrix. In this paper, we focus on the subgraph centrality \cite{estrada_subgraph_2005} (i.e., $\diag(f(\mat{A}))$)  and total communicability~\cite{benzi_total_2013} (i.e., $f(\mat{A})\vec{1}$) problems, where $f(\mat{A}))$ is a given matrix function.

Although these centrality measures have been successfully used in many problems~\cite{benzi_ranking_2013,estrada_physics_2012,estrada_network_2010,estrada_subgraph_2005}, computing a matrix function for large networks can be very demanding using the current numerical methods. Direct methods, such as \texttt{expm}~\cite{al-mohy_new_2010,higham_scaling_2005} or the Schur-Parlett algorithm~\cite{davies_schur-parlett_2003}, have a computational cost of $O(n^3)$ and yield a full dense matrix $f(\mat{A})$, hence are only feasible for small matrices. Methods based on Gaussian quadrature rules~\cite{benzi_quadrature_2010,fenu_block_2013,golub_matrices_2009} can estimate the diagonal entries of $f(\mat{A})$ without evaluating the full matrix function, but are prone to numerical breakdown when $\mat{A}$ is sparse and large (which is often the case with real networks), and thus, are often employed to determine only the most important nodes in the network. Krylov-based methods \cite{afanasjew_implementation_2008,eiermann_restarted_2006,guttel_rational_2013,guttel_limitedmemory_2020} can efficiently compute $f(\mat{A})\vec{v}$, for $\vec{v} \in \mathbb{R}^{n}$, provided that $\mat{A}$ is well conditioned. Otherwise, their convergence rate can be very slow or even stagnate since a general and well-established procedure to precondition the matrix $\mat{A}$ does not exist. Rational Krylov methods are often more resilient to the condition number and provide a better approximation to $f(\mat{A})\vec{v}$ than polynomial ones, but require solving a linear system for each vector of the basis. Moreover, the stopping criteria for these methods remains an open issue~\cite{guttel_limitedmemory_2020}.

Monte Carlo methods \cite{benzi_analysis_2017,dimov_monte_2008,dimov_parallel_2001,forsythe_matrix_1950,ji_convergence_2013} provide an alternative way to calculate matrix functions, primarily for solving linear systems. In essence, these methods construct a discrete-time Markov chain whose underlying random paths evolve through the different indices of the matrix, which can be formally understood as the Monte Carlo sampling of the corresponding Neumann series. Their convergence has been rigorously established in \cite{benzi_analysis_2017,dimov_new_2015,ji_convergence_2013}. Recently, \cite{acebron_monte_2019,acebron_highly_2020,guidotti_stochastic_2023} extended these methods for the evaluation of the matrix exponential and Mittag-Leffler functions.

Another strategy is to construct a random \textit{sketch} (i.e., a probabilistic representation of the matrix) and then use it to approximate the desired operation. This is a basic idea in contemporary numerical linear algebra \cite{martinsson_randomized_2020,murray_randomized_2023}. Some recent studies have shown that a polynomial Krylov method can be accelerated using randomization techniques \cite{cortinovis_speeding_2023,guttel_randomized_2023}.
In this paper, we propose a new stochastic algorithm that randomly samples full rows and columns of the matrix as a way to approximate the target function using the corresponding power series expansion. Through an extensive set of numerical experiments, we show that our approach converges much faster than the original Monte Carlo method and that it is particularly effective for estimating the subgraph centrality and total communicability of large networks. We also compare our method against other classical and randomized methods considering very large matrices.

The paper is organized as follows. Section~\ref{sec:background} presents a brief overview of the centrality measures defined in terms of matrix functions. Section~\ref{sec:algorithm} describes our randomized algorithm and how it can be implemented efficiently. In Section~\ref{sec:results}, we evaluate the performance and accuracy of our method by running several benchmarks with both synthetic and real networks. We also compare our method against several other algorithms. In the last section, we conclude our paper and present some future work.

\section{Background} \label{sec:background}

In this section, we introduce some definitions and ideas from graph theory that will be used throughout the paper.

\subsection{Graph definitions} \label{sec:graph_def}

A \textit{graph} or \textit{network} $G = (V, E)$ is composed of a set $V = \{1, 2, \dots, n\}$ of \textit{nodes} (or \textit{vertices}) and a set $E = \{(u, v) : u, v \in V\}$ of \textit{edges} between them~\cite{newman_networks_2018}. A graph is \textit{undirected} if the edges are bidirectional and \textit{directed} if the edges are unidirectional. A \textit{walk} of length $k$ over the graph $G$ is a sequence of vertices $v_1, v_2, \dots, v_{k + 1}$ such that $(v_i, v_{i + 1}) \in E$ for $i =1, 2 \ldots, k$. A \textit{closed walk} is a walk that starts and ends at the same vertex, i.e., $v_1 = v_{k + 1}$. An edge from a node to itself is called a \textit{loop}. A \textit{subgraph} of graph $G$ is a graph created from a subset of nodes and edges of $G$. The \textit{degree} of a node is defined as the number of edges entering or exiting the node. In directed graphs, the \textit{in-degree} counts the number of incoming edges and \textit{out-degree}, the number of outgoing edges.

A graph $G$ can be represented through an \textit{adjacency matrix} $\mat{A} \in \mathbb{R}^{n \times n}$:
\begin{equation}
\label{eq:adj_matrix}
	\mat{A} = (a_{ij}); \qquad a_{ij} = \left\{
	\begin{array}{@{}ll@{}}
		w_{ij}, & \text{if edge } (i, j) \text{ exists in  graph } G \\
		0, & \text{otherwise}
	\end{array}\right.
\end{equation}
where $w_{ij}$ is the weight of the edge $(i, j)$. In this paper, we focus our attention on graphs that are undirected and \textit{unweighted} (i.e., $w_{ij} = 1$ for all edges $(i, j)$) and do not contain loops or have multiple edges between nodes. Consequently, all matrices in this paper will be symmetric, binary, and with zeros along the diagonal. Notwithstanding, it is worth mentioning that our method is general and can be applied to other classes of matrices.

\subsection{Centrality measures} \label{sec:graph_metrics}

There are many \textit{node centrality} measures, and the simplest one must be the \textit{degree centrality}~\cite{freeman_centrality_1978}. The degree of a node provides a rough estimation of its importance, yet it fails to take into consideration the connectivity of the immediate neighbours with the rest of the network. Instead, let us consider the flow of information in the network.  An important node must be part of routes where the information can flow through, and thus, be able to spread information very quickly to the rest of the network. These information routes are represented as walks over the network. This is the main idea behind walk-based centralities and was formalized by Estrada and Higham~\cite{estrada_network_2010}.

In graph theory, it is well known that the entry $(\mat{A}^k)_{ij}$  counts the number of walks of length $k \geq 1$ over graph $G$ that starts at node $i$ and end at node $j$. Then, the entry $f_{ij}$ of the matrix function $f(\mat{A})$ defined as
\begin{equation}
\label{eq:power_series}
f(\mat{A}) = \sum_{k = 0}^\infty {\zeta_k \mat{A}^k}
\end{equation}
measures how easily the information can travel from node $i$ to node $j$. The entry $\mat{A}^k$ is scaled by a coefficient $\zeta_k$, such that $\zeta_k \geq \zeta_{k + 1} \geq 0$ and $\zeta_k \rightarrow 0$ when $k$ is large, in order to penalize the contribution of longer walks and ensure the convergence of the series. The two most common choices for $f(\mat{A})$ are the matrix exponential $e^{\mat{A}}$ and resolvent $(I - \gamma \mat{A})^{-1}$~\cite{estrada_network_2010}, but other matrix functions can be used as well \cite{arrigo_edge_2016,arrigo_ml_network_2021,benzi_ranking_2013}.

From the power series (\ref{eq:power_series}), Estrada defined \textit{f-subgraph centrality}~\cite{estrada_network_2010,estrada_subgraph_2005} as the diagonal of $f(\mat{A})$, that is $fSC(i) = (f(A))_{ii}$, and measures the importance of this node based on its participation in all subgraphs in the network. The sum over all nodes of the subgraph centrality has become known as the \textit{Estrada Index}~\cite{delapena_estimating_2007,estrada_characterization_2000}, which was first introduced to quantify the degree of folding in protein chains~\cite{estrada_characterization_2000}, but later extended to characterize the global structure of general complex networks~\cite{estrada_statistical-mechanical_2007,estrada_subgraph_2006}.

Later, Benzi and Klymko~\cite{benzi_ranking_2013,benzi_total_2013} introduced the concept of \textit{f-total communicability} based on the row sum of $f(\mat{A})$, ranking the nodes according to how well they can communicate with the rest of the network. Formally, the \textit{f-total communicability} is expressed as
	\begin{equation}
	\label{eq:total_comm}
	fTC(i) = (f(\mat{A}) \vec{1})_i,
	\end{equation}
where $\vec{1}$ is a vector of length $n$ with all entries set to $1$. If we consider the matrix resolvent $f(\mat{A}) = (I - \gamma \mat{A})^{-1}$, the total communicability of a node corresponds to the well-known Katz's Centrality~\cite{hubbell_inputoutput_1965,katz_new_1953}.

In the context of network science, it is common to introduce a weighting parameter $\gamma\in(0, 1)$ and work with the parametric matrix function $f(\gamma \mat{A})$. The parameter $\gamma$ can be interpreted as the \textit{inverse temperature} and accounts for external disturbances on the network~\cite{estrada_physics_2012}. Furthermore, the value of $\gamma$ is often chosen in such a way that the terms $\zeta_k (\gamma \mat{A})^k$ in (\ref{eq:power_series}) are monotonically decreasing in order to preserve the notion that the information travels faster to nearby nodes compared to those that are farther away.

\section{Randomized algorithm for matrix functions} \label{sec:algorithm}

\begin{algorithm}[t]
\caption{A Monte Carlo method adapted from \cite{benzi_analysis_2017} for computing a matrix $\mat{F}$ as an approximation of $f(\mat{A})$. $N_s$ represents the number of random walks for approximating each row and $W_c$ is the weight cutoff.} \label{code:classical_mc_full}
\begin{algorithmic}[1]
\Function{MC}{$\mat{A}$, $N_s$, $W_c$}
\State $\mat{F} = \mat{0}$
\State $\mat{T} = \left \{ t_{ij} = \dfrac{\abs{a_{ij}}}{\sum_k{\abs{a_{ik}}}} \right \}$

\For{$i = 1, 2, \dots, n$} \Comment{for each row in $\mat{A}$}

	\For{$s = 1, 2, \dots, N_s$} \Comment{for each random walk}
        \State $\ell_0 = i; W^{(0)} = \dfrac{1}{N_s}; k = 0$

		\While{$W^{(k)} > W_c W^{(0)}$} \Comment{compute the $k$-th step}
			\State $f_{i \, \ell_k} = f_{i \, \ell_k} + \zeta_{k} W^{(k)}$
			\State $\ell_{k + 1} =$ \Call{SelectNextState}{$\mat{T}$, $\ell_k$}
			\State $W^{(k + 1)} = W^{(k)} \dfrac{a_{\ell_k \ell_{k + 1}}}{t_{\ell_k \ell_{k + 1}}}$
			\State $k = k + 1$
		\EndWhile
	\EndFor
\EndFor

\State \Return $\mat{F}$
\EndFunction
\end{algorithmic}
\end{algorithm}

Ulam and von Neumann \cite{forsythe_matrix_1950} were the first to propose a Monte Carlo method for computing the matrix inverse as a way to solve linear systems, which was later refined by \cite{benzi_analysis_2017,dimov_monte_2008,dimov_new_2015}. It consists of generating random walks over the matrix $\mat{A}$ to approximate each power in the corresponding series (\ref{eq:power_series}). Starting from a row $\ell_0$, a random walk consists of a random variable $W^{(k)}$ and a sequence of states $\ell_0, \ell_1, \ldots, \ell_k$, which are obtained by randomly jumping from one row to the next. At each step $k$, the program updates $W^{(k)}$ and add the results to entry $f(\mat{A})_{i\ell_k}$ as an approximation for the $k$-th term in the series~(\ref{eq:power_series}).

The full procedure is described in Algorithm \ref{code:classical_mc_full}. The \texttt{SelectNextState} routine randomly selects an entry in row $\ell_k$ to determine which row to jump to in the next step of the random walk. The probability of choosing an entry $j$ is equal to $\mathds{P}(\ell_{k + 1} = j \mid \ell_k = i) = t_{ij}$, where $t_{ij}$ is an entry of a transition probability matrix $\mat{T}$.

The main limitation of this method is that each random walk only updates a single entry of $f(\mat{A})$ at a time, requiring a large number of walks just to estimate a single row with reasonable accuracy. Therefore, our objective is to modify this algorithm such that it samples entire rows and columns of $\mat{A}$ when approximating each term in the series (\ref{eq:power_series}), drastically reducing the number of walks necessary to achieve the desired precision.

\subsection{Mathematical description of the method} \label{sec:math_description}

In the following, we discuss how to extend the randomized matrix product algorithm proposed in \cite{drineas_fast_2001,drineas_fast_2006-I} to compute an arbitrary matrix power.

\begin{lemma}
 	Let $R_i$ and $C_j$ denote the $i$-th row and $j$-th column of $\mat{A} \in \mathbb{R}^{n \times n}$. The matrix power $\mat{A}^p$ with $p \in \mathbb{N}$ and $p \geq 2$ can be evaluated as
\begin{equation}
\label{eq:power_mat}
	\mat{A}^p = \sum_{i_2 = 1}^n{\sum_{i_3 = 1}^n{\dots \sum_{i_p = 1}^n{C_{i_2} a_{i_2 i_3} a_{i_3 i_4} \dots a_{i_{p - 1} i_p} R_{i_p}}}}.
\end{equation}
\end{lemma}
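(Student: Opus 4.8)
The plan is to prove the identity by induction on $p$, using the outer-product form of matrix multiplication as the engine. Recall the elementary fact that for any two conformable matrices $\mat{B}$ and $\mat{D}$ the product decomposes as a sum of rank-one outer products, $\mat{B}\mat{D} = \sum_{k} (\text{col}_k \mat{B})(\text{row}_k \mat{D})$; this is precisely the representation exploited by the randomized matrix multiplication of Drineas et al.\ cited above, and iterating it is what produces (\ref{eq:power_mat}). For the base case $p = 2$ the scalar chain $a_{i_2 i_3}\cdots a_{i_{p-1} i_p}$ is empty (equal to $1$) and the index $i_p$ coincides with $i_2$, so the right-hand side collapses to $\sum_{i_2} C_{i_2} R_{i_2}$. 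First I would verify that this is exactly the outer-product expansion of $\mat{A}\cdot\mat{A}$, by checking that the $(r,c)$ entry of $\sum_{i_2} C_{i_2} R_{i_2}$ is $\sum_{i_2} a_{r i_2} a_{i_2 c} = (\mat{A}^2)_{rc}$, since $(C_{i_2})_r = a_{r i_2}$ and $(R_{i_2})_c = a_{i_2 c}$.

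For the inductive step I would assume the formula holds for some $p \geq 2$ and compute $\mat{A}^{p+1} = \mat{A}^p \mat{A}$ by multiplying the inductive expression on the right by $\mat{A}$. The key sub-step is the identity $R_{i_p}\mat{A} = \sum_{i_{p+1}} a_{i_p i_{p+1}} R_{i_{p+1}}$, that is, a row of $\mat{A}$ times $\mat{A}$ is a weighted combination of rows of $\mat{A}$; this follows directly from $(R_{i_p}\mat{A})_c = \sum_{j} a_{i_p j} a_{j c}$. Substituting this into the inductive hypothesis and absorbing the new factor $a_{i_p i_{p+1}}$ into the scalar chain yields exactly the claimed expression with $p$ replaced by $p+1$, closing the induction.

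Alternatively — and perhaps more transparently — I could give a single-shot derivation by factoring $\mat{A}^p = \mat{A}\cdot \mat{A}^{p-2}\cdot\mat{A}$ (with $\mat{A}^0 = \mat{I}$ when $p = 2$), applying the outer-product expansion to the two outer factors to pull out $C_{i_2}$ and $R_{i_p}$, and then expanding the middle factor $(\mat{A}^{p-2})_{i_2 i_p}$ as a sum over length-$(p-2)$ walks, $\sum_{i_3,\dots,i_{p-1}} a_{i_2 i_3}\cdots a_{i_{p-1} i_p}$. I do not anticipate a genuine obstacle here, as the statement is essentially a restatement of associativity together with the rank-one decomposition of a product. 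The only thing requiring care is the index bookkeeping: verifying that the scalar chain contains exactly $p-2$ factors while the outer product $C_{i_2}R_{i_p}$ supplies the remaining two, so that each term is a product of $p$ entries of $\mat{A}$ summed over the $p-1$ free indices $i_2,\dots,i_p$ — matching the standard expansion $(\mat{A}^p)_{rc} = \sum a_{r i_2} a_{i_2 i_3}\cdots a_{i_p c}$ — together with the degenerate handling of the empty chain and the coincidence $i_p = i_2$ in the $p=2$ case.
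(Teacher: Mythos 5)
Your proof is correct and takes essentially the same route as the paper's: both rest on the rank-one outer-product decomposition $\mat{A}\mat{B} = \sum_k C_k B_k$ together with the identity expressing a row of a matrix power as an $a_{ij}$-weighted combination of rows of the next-lower power. The only difference is organizational --- you run a bottom-up induction from $\mat{A}^2$ by repeated right-multiplication, while the paper unrolls the same row recursion top-down from $\mat{A}^p$ --- which is bookkeeping rather than substance (your explicit base case and inductive step arguably make the argument slightly more rigorous than the paper's ``recursively applying'' phrasing).
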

\begin{proof}

Recall that the matrix product $\mat{AB}$ with $\mat{A}, \mat{B} \in \mathbb{R}^{n \times n}$ can be expressed as a sum of outer products \cite{drineas_fast_2006-I}:
\begin{equation*}
\mat{AB} = \sum_{k = 1}^{n} C_k B_k,
\end{equation*}
where $C_k$ is the $k$-th column of $\mat{A}$ and $B_k$ is the $k$-th row of $\mat{B}$. Therefore, a power $p$ of a square matrix $\mat{A}$ can be written as
\begin{equation}
\label{eq:power_full_recursion}
	\mat{A}^p = \sum_{k = 1}^n{C_k R_k^{(p - 1)}},
\end{equation}
where $R_k^{(p - 1)}$ is the $k$-th row of $\mat{A}^{p - 1}$. For a single row,  (\ref{eq:power_full_recursion}) is reduced to
\begin{equation}
\label{eq:power_row_recursion}
	R^{(p)}_i = \sum_{k = 1}^n{a_{ik} R_k^{(p - 1)}}.
\end{equation}

Recursively applying (\ref{eq:power_row_recursion}) for the powers $p, p - 1, \dots, 2$ and then substituting the expansion in (\ref{eq:power_full_recursion}) leads to the expression in (\ref{eq:power_mat}).

\end{proof}

\qedsymbol

\begin{corollary}
	Let $f(\mat{A}): \mathbb{R}^{n \times n} \rightarrow \mathbb{R}^{n \times n}$ be a matrix function defined by the power series
	\begin{equation}
	f(\mat{A})=\zeta_0\mat{I} + \zeta_1\mat{A}+\sum_{k=2}^{\infty} {\zeta_k \mat{A}^k}=\mat{H} + \mat{U},
	\end{equation}
    where $\mat{H} = \zeta_0\mat{I} + \zeta_1\mat{A}$. Concerning matrix $\mat{U}$, it can be written as the following sum of rank-one matrices:
	\begin{equation}
	\label{eq:full_func_sum}
		\mat{U}= \sum_{k = 2}^{\infty}{ \sum_{i_2 = 1}^n{\sum_{i_3 = 1}^n{\dots \sum_{i_k = 1}^n{\zeta_k C_{i_2} a_{i_2 i_3} a_{i_3 i_4} \dots a_{i_{k - 1} i_k} R_{i_k}}}}}.
	\end{equation}

\end{corollary}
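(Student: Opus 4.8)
The plan is to obtain this directly from the Lemma, applied term by term to the tail of the series. First I would isolate $\mat{U}$: subtracting $\mat{H} = \zeta_0\mat{I} + \zeta_1\mat{A}$ from the power series defining $f(\mat{A})$ leaves
\begin{equation*}
\mat{U} = \sum_{k=2}^{\infty} \zeta_k \mat{A}^k.
\end{equation*}
The key observation is that every exponent in this sum satisfies $k \geq 2$, so each matrix power $\mat{A}^k$ lies precisely within the hypothesis of the Lemma.

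Next I would substitute the representation (\ref{eq:power_mat}) for each $\mat{A}^k$. The Lemma expresses $\mat{A}^k$ as the nested sum $\sum_{i_2=1}^n \cdots \sum_{i_k=1}^n C_{i_2} a_{i_2 i_3} \cdots a_{i_{k-1} i_k} R_{i_k}$ of rank-one outer products $C_{i_2} R_{i_k}$. Since $\zeta_k$ is a scalar constant with respect to the inner indices $i_2, \dots, i_k$, it can be carried inside the nested sums without any change. Doing this for each $k$ and then collecting over $k = 2, 3, \dots$ reproduces the triple-indexed expression on the right-hand side of (\ref{eq:full_func_sum}) verbatim, completing the derivation.

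The only step demanding genuine care is the handling of the infinite outer summation over $k$ alongside the finite inner sums. Because $f(\mat{A})$ is assumed to be defined by a convergent power series, the partial sums $\sum_{k=2}^{K} \zeta_k \mat{A}^k$ converge to $\mat{U}$ in any submultiplicative matrix norm as $K \to \infty$. Each such partial sum is a genuinely finite linear combination of the rank-one contributions, so the substitution from the Lemma is exact at every finite stage, and passing to the limit preserves the rank-one decomposition. Thus the sole obstacle—if one insists on full rigor—is the interchange of the limit with the finite rearrangement, which is justified by the assumed convergence; the algebraic content of the corollary is an immediate consequence of the Lemma.

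\qedsymbol
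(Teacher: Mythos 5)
Your proposal is correct and follows exactly the route the paper intends: the corollary is stated there as an immediate consequence of the Lemma, obtained by substituting the representation (\ref{eq:power_mat}) for each $\mat{A}^k$ with $k \geq 2$ in the tail $\mat{U} = \sum_{k=2}^{\infty} \zeta_k \mat{A}^k$ of the series. Your additional remark on interchanging the limit over $k$ with the finite inner rearrangements is a welcome touch of rigor that the paper leaves implicit.
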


The multiple sums appearing in the definition of matrix $\mat{U}$ in (\ref{eq:full_func_sum}) can be exploited in practice to construct a probabilistic algorithm similar to \cite{benzi_analysis_2017,dimov_monte_2008}, which was originally created for computing the inverse matrix. In fact, the formal procedure is analogous, but instead of generating random scalar variables our goal here consists of generating randomly rank-one matrices governed by the following Markov chain.

\begin{definition}
Let $\{X_k:k\ge 0\}$ be a Markov chain taking values in the state space  $S~=~\{1,2,\dots,n\}$  with initial distribution
and transition probability matrix given by
\begin{eqnarray}
(i)&& \mathds{P}(X_0 = \ell_0) = p_{\ell_0}=\frac{\|C_{\ell_0}\|_2}{\sum_{k = 1}^n{\|C_k\|_2}} \label{eq:initial_prob} \\
(ii)&& \mat{T} = (t_{ij}); \quad t_{ij} = \mathds{P}(\ell_{k + 1} = j \mid \ell_k = i) = \frac{\abs{a_{ij}}}{\sum_{k = 1}^n{\abs{a_{ik}}}}.\label{definition1}
\end{eqnarray}
\end{definition}
Here, the indices $\ell_m$ denote the corresponding state reached by the Markov chain after $m$-steps. We use the initial state $\ell_0$ of the Markov chain to choose randomly a column of matrix $\mat{A}$,  $C_{\ell_0}$. After $k$-steps of the Markov chain, the state is $\ell_k$, which is used to select the corresponding row of the matrix $\mat{A}$,  $R_{\ell_k}$. During this random evolution of the Markov chain different states are visited according to the corresponding transition probability, and along this process a suitable multiplicative random variable $W^{(k)}$ is updated conveniently. Finally,  matrix $\mat{U}$ can be computed through the expected value of a given functional, as proved formally by the following Lemma.

\begin{lemma}
	\label{prop:estimator_psi}
	Let $\mat{Z}(\omega)$ be a realization of a random matrix at a point $\omega$ of the discrete sample space, defined as
	\begin{equation}
		\label{eq:random_walk_realization}
		\mat{Z}(\omega) = \sum_{k = 0}^\infty {\zeta_{k+2} \,C_{X_0(\omega)}\, W^{(k)} R_{X_k(\omega)}}.
	\end{equation}
	Here  $W^{(k)}$ is a multiplicative random variable defined recursively as
    \begin{equation} \label{weights}
	   W^{(0)} = \frac{1}{p_{\ell_0}} \qquad
	   W^{(k)} = W^{(k - 1)} \frac{a_{\ell_{k - 1} \ell_k}}{t_{\ell_{k - 1} \ell_k}}.
    \end{equation}
	Then, it holds that
	\begin{equation}
	\label{eq:estimate_full}
	\mat{U} = \mathds{E}[\mat{Z}].
	\end{equation}
\end{lemma}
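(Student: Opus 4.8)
The plan is to establish the identity term by term, exploiting linearity of the expectation together with the telescoping structure built into the multiplicative weight $W^{(k)}$. First I would fix a single index $k \ge 0$ and compute the expectation of the $k$-th rank-one summand $\zeta_{k+2}\,C_{X_0}\,W^{(k)}\,R_{X_k}$ of $\mat{Z}$. Since the chain is finite-state, the law of the path $(\ell_0,\ell_1,\dots,\ell_k)$ is the explicit product
\begin{equation*}
\mathds{P}(X_0=\ell_0,\dots,X_k=\ell_k)=p_{\ell_0}\prod_{m=1}^{k} t_{\ell_{m-1}\ell_m},
\end{equation*}
so the expectation is a finite sum over all such paths, weighted by this probability.

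The key step is then the cancellation. Unrolling the recursion in (\ref{weights}) gives the closed form $W^{(k)}=p_{\ell_0}^{-1}\prod_{m=1}^{k} a_{\ell_{m-1}\ell_m}/t_{\ell_{m-1}\ell_m}$. Substituting this into the expectation, the factor $p_{\ell_0}^{-1}$ cancels the initial probability $p_{\ell_0}$ and each $t_{\ell_{m-1}\ell_m}$ in the denominator cancels the corresponding transition probability, leaving
\begin{equation*}
\mathds{E}\!\left[\zeta_{k+2}\,C_{X_0}\,W^{(k)}\,R_{X_k}\right]
=\zeta_{k+2}\sum_{\ell_0=1}^n\cdots\sum_{\ell_k=1}^n
C_{\ell_0}\Big(\prod_{m=1}^{k} a_{\ell_{m-1}\ell_m}\Big)R_{\ell_k}.
\end{equation*}
Relabelling $\ell_0,\dots,\ell_k$ as $i_2,\dots,i_{k+2}$ and setting $p=k+2$ turns this into exactly the $p$-th inner block of (\ref{eq:full_func_sum}); the degenerate case $k=0$ reproduces the $\zeta_2 \mat{A}^2$ term via the outer-product identity $\sum_\ell C_\ell R_\ell=\mat{A}^2$. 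Summing over $k\ge 0$ and reindexing by $p=k+2$ then recovers $\mat{U}$, which is the claim.

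The main obstacle is not the algebra but the interchange of the expectation with the infinite series defining $\mat{Z}$ — the argument above only handles each term in isolation. I would justify $\mathds{E}\big[\sum_k\cdot\big]=\sum_k\mathds{E}[\cdot]$ via Fubini--Tonelli, which requires the absolute series $\sum_k \mathds{E}\big[\,\abs{\zeta_{k+2}}\,\abs{W^{(k)}}\,\|C_{X_0}\|_2\,\|R_{X_k}\|_2\,\big]$ to be finite. Repeating the cancellation above with absolute values collapses the $k$-th term to $\abs{\zeta_{k+2}}\sum_{\ell_0,\ell_k}\|C_{\ell_0}\|_2\,\|R_{\ell_k}\|_2\,(\abs{\mat{A}}^{k})_{\ell_0\ell_k}$, with $\abs{\mat{A}}$ the entrywise modulus of $\mat{A}$, which grows no faster than a constant times $\rho(\abs{\mat{A}})^{k}$. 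Hence convergence of the dominating sum reduces to the requirement that the scalar power series $\sum_k \zeta_k z^k$ converge absolutely at $z=\rho(\abs{\mat{A}})$ — precisely the radius-of-convergence condition already assumed, and enforced in practice through the weighting parameter $\gamma$. A minor technical point worth recording is that whenever $a_{ij}\neq 0$ one has $t_{ij}>0$ by (\ref{definition1}), so every ratio $a_{\ell_{m-1}\ell_m}/t_{\ell_{m-1}\ell_m}$ is well defined along any path of positive probability, while paths traversing a zero transition carry probability zero and may be discarded.
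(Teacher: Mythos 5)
Your proof follows essentially the same route as the paper's: unroll the recursion (\ref{weights}) to get the closed form of $W^{(k)}$, cancel $p_{\ell_0}$ and the transition probabilities $t_{\ell_{m-1}\ell_m}$ against the path probability, identify $\mathds{E}\bigl[\zeta_{k+2}\,C_{X_0} W^{(k)} R_{X_k}\bigr]$ with the corresponding block of (\ref{eq:full_func_sum}) (i.e.\ $\zeta_{k+2}\mat{A}^{k+2}$), and sum over $k$. The only real difference is that you explicitly justify the interchange $\mathds{E}\bigl[\sum_k \cdot\bigr]=\sum_k\mathds{E}[\cdot]$ via Fubini--Tonelli with an absolute-convergence bound tied to $\rho(\abs{\mat{A}})$, whereas the paper invokes only ``linearity of the expected value'' and defers all quantitative convergence questions to its later variance analysis in Theorem~\ref{convergence:th}; this makes your version slightly more careful on that point (modulo the minor caveat that $(\abs{\mat{A}}^k)_{ij}$ is bounded by $C_\epsilon(\rho(\abs{\mat{A}})+\epsilon)^k$ rather than $C\,\rho(\abs{\mat{A}})^k$ in general, which is harmless when the series' radius of convergence strictly exceeds the spectral radius).
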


\begin{proof}
    Note that $\mat{Z}(\omega)$ is obtained from equation (\ref{eq:random_walk_realization}) as a sum of independent random matrices $\mat{Y}^{(k)}(\omega)$:
    \begin{equation}
    \label{eq:sumZ}
        \mat{Z} (\omega) = \sum_{k = 0}^\infty{\zeta_{k+2} \mat{Y}^{(k)} (\omega)},
    \end{equation}
    where
    \begin{equation*}
        \mat{Y}^{(k)}(\omega) = C_{X_0(\omega)} W^{(k)} R_{X_k(\omega)} = C_{X_0(\omega)} \frac{a_{\ell_0 \ell_1} a_{\ell_1 \ell_2} \dots a_{\ell_{k - 1} \ell_k}}{p_{\ell_0} t_{\ell_0 \ell_1} t_{\ell_1 \ell_2} \dots t_{\ell_{k - 1} \ell_k}}  R_{X_k(\omega)}.
    \end{equation*}
    Let $P^{(k)}(\omega)$ be the probability of occurring an event $\omega$ consisting in a transition from $\ell_0$ to $\ell_k$ after $k$ steps. This probability turns out to be  $p_{\ell_0} t_{\ell_0 \ell_1} t_{\ell_1 \ell_2} \dots t_{\ell_{k - 1} \ell_k}$. Therefore, the expected value of the random matrix $\mat{Y}^{(k)}(\omega)$ is given by,
    \begin{align*}
    \mathds{E}[\mat{Y}^{(k)}] & = \sum_\omega {P^{(k)}(\omega) \mat{Y}^{(k)}(\omega)} \\
    & = \sum_\omega{C_{X_0(\omega)} a_{\ell_0 \ell_1} a_{\ell_1 \ell_2} \dots a_{\ell_{k - 1} \ell_k} R_{X_k(\omega)}}.
    \end{align*}
    Note that every event $\omega$ can be described by different values of $k+1$ integer indices, running from $1$ to $n$, then
    \begin{align*}
    \mathds{E}[\mat{Y}^{(k)}] & = \sum_{i_0 = 1}^n{\sum_{i_1 = 1}^n{\dots \sum_{i_{k} = 1}^n{C_{i_0} a_{i_0 i_1} a_{i_1 i_2} \dots a_{i_{k-1} i_{k}} R_{i_{k}}}}}.
    \end{align*}
    Therefore, from (\ref{eq:full_func_sum}) we conclude that $\mathds{E}[\mat{Y}^{(k)}]=\mat{A}^{k+2}$. Finally after summing all contributions coming from any number of steps,  using (\ref{eq:sumZ}) and by linearity of the expected value operator we obtain
    \begin{equation}
    \label{eq:estimate_expected_value}
    \mathds{E}[\mat{Z}] = \sum_{k = 0}^\infty {\zeta_{k+2} \mathds{E}[\mat{Y}^{(k)}]} = \sum_{k = 0}^\infty {\zeta_{k+2} \mat{A}^{k+2}} = \mat{U}.
    \end{equation}
\end{proof}

\qedsymbol

\subsection{Practical implementation of the probabilistic method} \label{sec:rand-full}

To transform Lemma \ref{prop:estimator_psi} into a practical algorithm, we must first select a finite sample size $N_s$ and then compute the expected value $\mathds{E}[\mat{Z}]$ in (\ref{eq:estimate_full}) as the corresponding arithmetic mean. Additionally, each random walk must terminate after a finite number $m$ of steps. Mathematically, this is equivalent to considering only the first $m$ terms of the power series expansion. Since some random walks may retain important information of the matrix for longer steps than others and it is very difficult to determine a priori the number of steps required for achieving a specific precision, we adopt the following termination criteria: the computation of the random walk will end when the associated weight is less than a relative threshold $W_{c}$ \cite{benzi_analysis_2017}. In other words, a random walk terminates at  step $m$ when
\begin{equation}
    \label{eq:weight_cutoff}
    W^{(m)} \leq W_{c} W^{(0)},
\end{equation}
where $W^{(m)}$ is the weight after $m$ steps and $W^{(0)}$ is the weight at the initial step of the random walk. Formally, the infinite series in (\ref{eq:estimate_expected_value}) is truncated as
\begin{equation}
\label{eq:trunc_estimator}
f(\mat{A}) \approx \mat{H} + \mat{\hat{U}}
\end{equation}
with
\begin{equation}
\label{eq:u_hat}
\mat{\hat{U}} = \frac{1}{N_s} \sum_{s = 1}^{N_s}{\sum_{k = 0}^m{\zeta_{k + 2} C_{X_0(s)} \Omega_{\ell_0 \ell_k}^{(k)} R_{X_k(s)}}}
\end{equation}
where $\Omega_{ij}^{(k)}$ corresponds to the weight $W^{(k)}$ of the random walk that began at state $i$ and arrived at state $j$ after $k$ steps. Here $s$ indicates a realization of a random walk.

Computing $\mat{\hat{U}}$ directly from (\ref{eq:u_hat}) is ill-advised due to the large number of outer products, while also being very difficult to be parallelised efficiently. Instead, let us rearrange (\ref{eq:trunc_estimator}) to a more suitable form. The random walks with the same starting column can be grouped as
\begin{equation}
  \mat{\hat{U}} = \sum_{i = 1}^{n}{C_i \left ( \frac{1}{N_i} \sum_{s = 1}^{N_i}{\sum_{k = 0}^m{\zeta_{k + 2} \Omega_{i\, \ell_k}^{(k)} R_{X_k(s)}}} \right)},
\end{equation}
where $N_i$ is the number of random walks that began at column $i$. Assuming that $N_s >> 1$, the value of $N_i$ can be estimated \textit{a priori} as $N_i \approx p_i N_s$ with $p_i = \mathds{P}(X_0 = i)$ as defined in (\ref{eq:initial_prob}).

Let $\nu_{ij}$ denote a visit to the state $j$ at the step $k$ of a random walk that started at state $i$. For each visit $\nu_{ij}$, the weight of the walk is added to the entry $q_{ij}$ of  matrix $\mat{Q}$, defined as
\begin{equation}
\label{eq:rand_funm_Q}
 q_{ij} = \frac{1}{N_{i}}  \sum_{\nu_{ij}}{\zeta_{k + 2} \Omega_{ij}^{(k)}}.
\end{equation}
Then, we can rewrite (\ref{eq:u_hat}) as
\begin{equation}
\label{eq:rand_funm_formula}
\mat{\hat{U}} = \sum_{i = 1}^{n}{\sum_{j = 1}^{n} C_i q_{ij} R_{j}} = \mat{A}\mat{Q} \mat{A}.
\end{equation}

\begin{algorithm}[t]
\caption{A probabilistic algorithm for computing the matrix $\mat{F}$ as an approximation of $f(\mat{A})$. $N_s$ represents the total number of random walks and $W_c$ the weight cutoff.} \label{code:rand_funm_full}
\begin{algorithmic}[1]
\Function{RandFunm}{$\mat{A}$, $N_s$, $W_c$}
\State $\mat{Q} = \mat{0}$
\State $\mat{T} = \left \{ t_{ij} = \dfrac{\abs{a_{ij}}}{\sum_k{\abs{a_{ik}}}} \right \}$

\For{$i = 1, 2, \dots, n$} \Comment{for each column in $\mat{A}$}

	\State $N_i = N_s \, \mathds{P}(\ell_0 = i)$	\Comment{see equation (\ref{eq:initial_prob})}

	\For{$s = 1, 2, \dots, N_i$} \Comment{for each random walk}
        \State $\ell_0 = i; W^{(0)} = \dfrac{1}{N_i}; k = 0$

		\While{$W^{(k)} > W_c W^{(0)}$} \Comment{compute the $k$-th step}
			\State $q_{i \, \ell_k} = q_{i \, \ell_k} + \zeta_{k + 2} W^{(k)}$
			\State $\ell_{k + 1} =$ \Call{SelectNextState}{$\mat{T}$, $\ell_k$}
			\State $W^{(k + 1)} = W^{(k)} \dfrac{a_{\ell_k \ell_{k + 1}}}{t_{\ell_k \ell_{k + 1}}}$
			\State $k = k + 1$
		\EndWhile
	\EndFor
\EndFor
\State $\mat{F} = \zeta_0 \mat{I} + \zeta_1 \mat{A} + \mat{A}\mat{Q}\mat{A}$
\State \Return $\mat{F}$
\EndFunction
\end{algorithmic}
\end{algorithm}

Algorithm \ref{code:rand_funm_full} describes the procedure for approximating $f(\mat{A})$ based on equations (\ref{eq:trunc_estimator}) and (\ref{eq:rand_funm_formula}). Assuming that matrix~$\mat{A}$ is sparse with $N_{nz}$ nonzero entries, Algorithm~\ref{code:rand_funm_full} requires $O(N_s m)$ operations to construct the matrix $\mat{Q}$ and $O(n N_{nz})$ to calculate the final product $\mat{AQA}$, for a total computational cost of order of $O(N_s m + n N_{nz})$. It also uses an additional $n^2$ space in memory to store the matrix $\mat{Q}$. It is possible to reduce memory consumption if the program divides rows of $\mat{Q}$ into blocks in such a way that only one block is computed at a time. At the end of each block, the program updates the matrix $f(\mat{A})$ and reuses the memory allocation for the next block.

\subsection{Diagonal of the matrix function} \label{sec:rand-diag}

Algorithm \ref{code:rand_funm_full} can be conveniently modified to compute only the diagonal of the matrix function. Let $Q_k$ denote the $k$-th row of  matrix $\mat{Q}$ defined in (\ref{eq:rand_funm_Q}). Then, the diagonal of $f(\mat{A})$ is approximated by  vector $\vec{d} = (d_i)$ as follows

\begin{equation}
\label{eq:rand_funm_formula_diag}
d_{i} = \vec{e}_{i}^\intercal \, f(\mat{A}) \, \vec{e}_i =  \zeta_0 + \zeta_1 a_{ii} + \sum_{k = 1}^n a_{ik} \, \langle Q_k, C_i \rangle,
\end{equation}
where $\langle \cdot, \cdot \rangle$ denotes the inner product and $e_i$ a vector from the canonical basis. Essentially, the program updates the value of $d_i$ immediately after the computation of row $Q_k$. In this way, only a single row of $\mat{Q}$ needs to reside in memory at a time. Naturally, if $a_{ik} \sim 0$, the program can skip the calculation of $Q_k$ in order to save computational resources. Note that multiple entries can be computed at the same time by reusing $Q_k$ and then selecting the appropriate entry $a_{ik}$ and column $C_i$.

In terms of computational cost, the computation of all diagonal entries requires $O(N_s m + N_{nz} \bar{n}_c)$ floating-point operations, where $\bar{n}_c$ denotes the average number of nonzero entries per column and consumes an additional $n$ space in memory. The algorithm described in this section will be referred to as \texttt{RandFunmDiag}.

\subsection{Action of the matrix function over a vector} \label{sec:rand-action}

Let $\vec{v}$ be a real vector in $\mathbb{R}^{n}$, our goal is to develop another algorithm based on Lemma \ref{prop:estimator_psi} for computing $f(\mat{A}) \, \vec{v}$. First, let us multiply the truncated series~(\ref{eq:trunc_estimator}) by the vector $\vec{v}$:
\begin{equation}
f(\mat{A})\vec{v} \approx  \vec{h} + \vec{\hat{u}}
\end{equation}
where
\begin{equation}
\label{eq:vec_u_hat}
\vec{\hat{u}} = \frac{1}{N_s} \sum_{s = 1}^{N_s}{\sum_{k = 0}^m{\zeta_{k + 2} C_{X_0(s)} \Omega_{\ell_0 \ell_k}^{(k)} r_{X_k(s)}}}
\end{equation}
with $r_i = \langle R_i, \vec{v} \rangle$ and $\vec{h} = \mat{H}\vec{v}$. Rearranging the series such that the random walks with the same starting column are grouped:
\begin{equation*}
\vec{\hat{u}} = \sum_{i = 1}^{n}{C_i \left ( \frac{1}{N_i} \sum_{s = 1}^{N_i}{\sum_{k = 0}^m{\zeta_{k + 2} \Omega_{i \, \ell_k}^{(k)} r_{X_k(s)}}} \right)}.
\end{equation*}

Then, the action of the matrix function $f(\mat{A})$ over the vector $\vec{v}$ can be approximated as
\begin{equation}
\label{eq:rand_funm_action_formula}
f(\mat{A})\vec{v} \approx \vec{h} + \mat{A}\vec{q}
\end{equation}
with
\begin{equation*}
\vec{q} = (q_i);
\qquad
\mat{q}_i = \frac{1}{N_i} \sum_{s = 1}^{N_i}{\sum_{k = 0}^m{\zeta_{k + 2} \Omega_{i \, \ell_k}^{(k)} r_{X_k(s)}}}.
\end{equation*}

\begin{algorithm}[t]
\caption{A probabilistic algorithm for computing $\vec{y}$ as an approximation of  $f(\mat{A})\vec{v}$. $N_s$ represents the total number of random walks and $W_c$ the weight cutoff threshold.} \label{code:rand_funm_action}

\begin{algorithmic}[1]
\Function{RandFunmAction}{$\mat{A}$, $\vec{v}$, $N_s$, $W_c$}

\State $\vec{q} = 0$
\State $\mat{T} = \left \{ t_{ij} = \dfrac{\abs{a_{ij}}}{\sum_k{\abs{a_{ik}}}} \right \}$
\State $\vec{r} = \mat{A}\vec{v}$

\For{$i = 1, 2, \dots, n$} \Comment{for each column in $\mat{A}$}

	\State $N_i = N_s \, \mathds{P}(\ell_0 = i)$	\Comment{see equation (\ref{eq:initial_prob})}

	\For{$s = 1, 2, \dots, N_i$} \Comment{for each random walk}
		\State $\ell_0 = i; W^{(0)} = \dfrac{1}{N_i}; k = 0$

		\While{$W^{(k)} > W_c W^{(0)}$} \Comment{compute the $k$-th step}
			\State $q_i = q_i + \zeta_{k + 2} \, W^{(k)} \, r_{\ell_k}$
			\State $\ell_{k + 1} =$ \Call{SelectNextState}{$\mat{T}$, $\ell_k$}
			\State $W^{(k + 1)} = W^{(k)} \, \dfrac{a_{\ell_k \ell_{k + 1}}}{t_{\ell_k \ell_{k + 1}}}$
			\State $k = k + 1$
		\EndWhile
	\EndFor
\EndFor
\State $\vec{y} = \zeta_0 \vec{v} + \zeta_1 \vec{r} + \mat{A}\vec{q}$
\State \Return $\vec{y}$
\EndFunction

\end{algorithmic}
\end{algorithm}

Algorithm \ref{code:rand_funm_action} describes the procedure for approximating $f(\mat{A})\vec{v}$ based on equation (\ref{eq:rand_funm_action_formula}) and the definition of the vector $\vec{q}$. It has a time complexity of $O(N_s m + N_{nz})$ and requires an additional $n$ space in memory to store the vector $\vec{q}$.

\subsection{Convergence of the method and numerical errors} \label{sec:convergence}

In the following, we prove the convergence of the Monte Carlo method described by equations (\ref{eq:trunc_estimator}) and (\ref{eq:rand_funm_formula}) through the following theorem:

\begin{theorem}\label{convergence:th}
Let $m$ be any positive integer. For each $k\in \{0,\ldots,m\}$, \\ let  $(\boldsymbol{\xi}^{(k)}(s))_{s \geq 1}$ be a collection of i.i.d. vector-valued random variable in $\mathbb{R}^{n^2}$ defined as $\boldsymbol{\xi}^{(k)}(s)=\zeta_{k + 2} \Omega_{\ell_0 \ell_k}^{(k)} \vect(C_{X_0(s)} R_{X_k(s)})$, and $V(s)=\sum_{k = 0}^m \boldsymbol{\xi}^{(k)}(s)$. Let $\psi= \sum_{s = 0}^{N_s} V(s)$, $\mu=\mathds{E}[V(s)]$, $\alpha=\max_{i}\{(\sum_{j = 1}^n |a_{i j}|)^2\}<1$, and $|\zeta_{k + 1}|<|\zeta_{k}|$ when $k\to\infty$. Then
\begin{equation}
\lim_{N_s\to\infty} \frac{\psi-N_s \mu}{\sqrt{N_s}}
\end{equation}
converges in distribution to a random vector distributed according to a multivariate normal distribution $N[0,\mat{\Theta}]$. Here $\mat{\Theta}=(\theta_{ij})$ is the covariance matrix, with $\theta_{ij}=Cov(v_{i}(s),v_{j}(s))$, and
$v_i(s)$ the $i$-th component of the random vector $V(s)$.
\end{theorem}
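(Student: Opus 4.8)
The plan is to recognize the statement as a direct instance of the classical multivariate central limit theorem (Lindeberg--L\'evy) applied to the i.i.d.\ sequence $(V(s))_{s\ge 1}$ of random vectors in $\mathbb{R}^{n^2}$. Once the three hypotheses of that theorem are in place --- that the $V(s)$ are independent, identically distributed, and possess a finite mean $\mu$ and a finite covariance matrix $\mat{\Theta}$ --- the convergence of $(\psi - N_s\mu)/\sqrt{N_s}$ to $N[0,\mat{\Theta}]$ follows immediately. Hence the genuine content of the proof is not the limit theorem but the verification of finite second moments, and this is precisely where the hypotheses $\alpha<1$ and $\abs{\zeta_{k+1}}<\abs{\zeta_k}$ are needed.

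First I would establish the i.i.d.\ structure. Each $V(s)$ is a deterministic function of a single random-walk realization, and distinct walks are generated independently from the same initial distribution (\ref{eq:initial_prob}) and transition matrix (\ref{definition1}); therefore the $V(s)$ are independent and identically distributed, and neither $\mu=\mathds{E}[V(s)]$ nor $\mat{\Theta}=(\theta_{ij})$ depends on $s$. For a fixed finite $m$ the vector $V(s)$ takes only finitely many values, because the state space $S$ is finite and each walk visits at most $m+1$ states, so $V(s)$ is automatically bounded and all its moments exist.

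The key step, and the main obstacle, is to make this bound \emph{explicit} and \emph{uniform in $m$}, which is what justifies passing to the untruncated series and what the stated hypotheses deliver. I would bound the weights directly: from $t_{ij}=\abs{a_{ij}}/\sum_l\abs{a_{il}}$ one obtains $\abs{a_{\ell_{j-1}\ell_j}/t_{\ell_{j-1}\ell_j}}=\sum_l\abs{a_{\ell_{j-1}l}}\le\sqrt{\alpha}$, so that $\abs{\Omega^{(k)}_{\ell_0\ell_k}}\le\alpha^{k/2}/p_{\ell_0}$. Since the rows and columns of $\mat{A}$ have uniformly bounded Euclidean norm, each summand satisfies $\lVert\boldsymbol{\xi}^{(k)}(s)\rVert\le c\,\abs{\zeta_{k+2}}\,\alpha^{k/2}/p_{\ell_0}$ for a constant $c$ depending only on $\mat{A}$. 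Combined with the eventual monotonicity $\abs{\zeta_{k+1}}<\abs{\zeta_k}$, the bounding series $\sum_{k\ge 0}\abs{\zeta_{k+2}}\,\alpha^{k/2}$ is essentially geometric and converges whenever $\alpha<1$, so $\lVert V(s)\rVert$ is dominated by a finite constant independent of $m$. Consequently $\mathds{E}[\lVert V(s)\rVert^2]<\infty$, every entry $\theta_{ij}=\mathrm{Cov}(v_i(s),v_j(s))$ of $\mat{\Theta}$ is finite, and $\mu$ and $\mat{\Theta}$ remain well defined in the limit $m\to\infty$.

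Finally I would invoke the limit theorem through the Cram\'er--Wold device, which reduces the multivariate convergence to a family of one-dimensional ones. For an arbitrary fixed $\vec{\theta}\in\mathbb{R}^{n^2}$, the scalars $\langle\vec{\theta},V(s)\rangle$ are i.i.d.\ with finite variance $\vec{\theta}^\intercal\mat{\Theta}\vec{\theta}$, so the ordinary scalar central limit theorem gives $\langle\vec{\theta},(\psi-N_s\mu)/\sqrt{N_s}\rangle\to N[0,\vec{\theta}^\intercal\mat{\Theta}\vec{\theta}]$ in distribution. Since this holds for every $\vec{\theta}$, the Cram\'er--Wold theorem yields convergence of $(\psi-N_s\mu)/\sqrt{N_s}$ to the multivariate normal $N[0,\mat{\Theta}]$, which is exactly the claim.
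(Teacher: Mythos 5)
Your proof is correct, and its skeleton---i.i.d.\ vectors $V(s)$, verification of finite second moments, then the multivariate CLT (via Cram\'er--Wold in your write-up, via a cited vector-valued CLT in the paper)---matches the paper's. The genuine difference lies in how the second moments are controlled. The paper computes $\mathds{E}[(\xi_i^{(k)}(s))^2]$ by summing over all $k$-step paths of the chain and invoking the identity $\sum_{j} a_{ij}^2/t_{ij}=(\sum_j\abs{a_{ij}})^2\le\alpha$, i.e.\ it bounds the weight in mean square, one step at a time. You instead bound the weight pathwise: because $t_{ij}\propto\abs{a_{ij}}$, the per-step ratio $\abs{a_{\ell_{j-1}\ell_j}}/t_{\ell_{j-1}\ell_j}=\sum_l\abs{a_{\ell_{j-1}l}}\le\sqrt{\alpha}$ is deterministic given the current state, so $\abs{\Omega^{(k)}_{\ell_0\ell_k}}\le\alpha^{k/2}/p_{\ell_0}$ almost surely. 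For this particular transition matrix the two bounds coincide numerically (the squared weight is at most $\alpha^{k}/p_{\ell_0}^2$ either way), but they buy different things. Your almost-sure bound dominates $\|V(s)\|$ by a convergent series uniformly in $m$, so finiteness of every entry of $\mat{\Theta}$ is immediate; in particular you never need the paper's opening step $\Var(v_i(s))=\sum_{k}\Var(\xi_i^{(k)}(s))$, which as written treats the terms $\xi_i^{(k)}(s)$, $k=0,\dots,m$, as uncorrelated even though they are functions of the same walk (the paper's subsequent inequalities can be repaired, e.g.\ by Minkowski's inequality, but the equality itself is not justified). Conversely, the paper's in-expectation computation is the one that would survive a change of sampling strategy: if $t_{ij}$ were not proportional to $\abs{a_{ij}}$, the pathwise ratio could be unbounded while $\sum_j a_{ij}^2/t_{ij}$ remains controlled, which is the standard setting of Monte Carlo variance analysis. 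Finally, your observation that for fixed $m$ the vector $V(s)$ is bounded anyway---so that the hypotheses $\alpha<1$ and the eventual decay of $\abs{\zeta_k}$ are really about uniformity of the bound in $m$---is a correct and worthwhile clarification of what the theorem's assumptions actually deliver, a point the paper leaves implicit.
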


\begin{proof}
This proof is based on the proof described in Theorem 3.4 \cite{ji_convergence_2013} conveniently modified for the current numerical method. Assuming that all random walks are independently generated, then
\begin{equation}\label{variance_total}
\Var(v_i(s))=\sum_{k = 0}^m \Var(\xi_i^{(k)}(s))\leq \sum_{k = 0}^m \mathds{E}[(\xi_i^{(k)}(s))^2]\leq \sum_{k = 0}^\infty \mathds{E}[(\xi_i^{(k)}(s))^2].
\end{equation}
Here, $\xi_i^{(k)}(s)$ denotes the $i$-th component of the random vector $\boldsymbol{\xi}^{(k)}(s)$. To compute the expected value $\mathds{E}[(\xi_i^{(k)}(s))^2]$, we have to enumerate all the different transitions that occurred between a given initial state $\ell_0$ and an arbitrary final state $\ell_k$ of the Markov chain in $k$ steps, along with the corresponding probabilities. This yields,
\begin{equation}
\mathds{E}[(\xi_i^{(k)}(s))^2]=\zeta_{k + 2}^2 \sum_{i_1 = 1}^n\cdots \sum_{i_{k} = 1}^n t_{\ell_0 i_1} t_{i_1 i_2} \cdots t_{i_{k-1} i_{k}}\frac{1}{p_{\ell_0}^2}  (\Omega_{\ell_0 i_k}^{(k)})^2 (g_i^{(i_k)})^2
\end{equation}
where $\vec{g}^{(i_k)}=\vect(C_{\ell_0} R_{i_k})$ is the vector obtained after vectorizing the matrix $C_{\ell_0} R_{i_k}$ with $g_i^{(i_k)}$ as the $i$-th component. From equation (\ref{weights}), it follows that
\begin{align}
\mathds{E}[(\xi_i^{(k)}(s))^2] &= \zeta_{k + 2}^2 \sum_{i_1 = 1}^n\sum_{i_2 = 1}^n\cdots \sum_{i_{k} = 1}^n\frac{1}{p_{\ell_0}^2} \frac{(a_{\ell_0 i_1} a_{i_1 i_2} \cdots a_{i_{k-1} i_{k}})^2}{t_{\ell_0 i_1} t_{i_1 i_2} \cdots t_{i_{k-1} i_{k}}}  (g_i^{(i_k)})^2 \nonumber\\
&=\frac{\zeta_{k + 2}^2}{p_{\ell_0}^2}\sum_{i_1 = 1}^n  \frac{a_{\ell_0 i_1}^2}{t_{\ell_0 i_{1}}} \sum_{i_2 = 1}^n \frac{a_{i_{1} i_2}^2}{t_{i_{1} i_{2}}}
\cdots
\sum_{i_{k} = 1}^n \frac{a_{i_{k-1} i_k}^2}{t_{i_{k-1} i_{k}}}(g_i^{(i_k)})^2.
\end{align}
Note that $\sum_{j = 1}^n \frac{a_{i j}^2}{t_{i j}}=(\sum_{j = 1}^n |a_{i j}|)^2$ from equation (\ref{definition1}), then it holds
\begin{equation}
\mathds{E}[(\xi_i^{(k)}(s))^2] \leq \frac{\zeta_{k + 2}^2}{p_{\ell_0}^2} \alpha^k \beta.
\end{equation}
Here $\beta=\max_{j}\{(g_i^{(j)})^2\}$.
Since $\zeta_{k+2}\leq 1,\forall k$ for any matrix function of interest, from equation (\ref{variance_total}), we have
\begin{equation}
\Var(v_i(s))\leq \sum_{k = 0}^\infty \mathds{E}[(\xi_i^{(k)}(s))^2]\leq \frac{\beta}{p_{\ell_0}^2}
\sum_{k = 0}^\infty |\zeta_{k + 2}| \alpha^k \leq \frac{\beta}{\alpha^2 p_{\ell_0}^2} \sum_{j = 0}^\infty |\zeta_{j}| \alpha^j.
\end{equation}
Since
$\lim_{k\to\infty} \alpha \frac{|\zeta_{k + 1}|}{|\zeta_{k}|}< 1$, it holds that the series  $\sum_{j = 0}^\infty |\zeta_{j}| \alpha^j$ converges and therefore the variance $\Var(v_i(s))$ is bounded.

Note, however,  that for the specific case of the exponential function, \\ $|\zeta_{k + 1}|/|\zeta_{k}|=1/(k+1)$, and therefore any value of $\alpha$ is allowed to ensure convergence of the series. However, for the inverse function $\alpha< 1$ is strictly mandatory.

Since the variance is finite, the Central Limit Theorem for vector-valued random variables (see \cite{jacod_probability_2004} e.g.) guarantees that
\begin{equation}
\frac{\psi-N_s \mu}{\sqrt{N_s}} \to^{d} N[0,\mat{\Theta}].
\end{equation}
\end{proof}

\qedsymbol

Therefore, for a finite sample size $N_s$, replacing the expected value in (\ref{eq:estimate_full}) by the arithmetic mean introduces
an error which is statistical in nature and known to be distributed according to a normal distribution. From Theorem \ref{convergence:th} the standard error of this mean $\varepsilon$ can be readily estimated as $\sigma^2/\sqrt{N_s}$, being $\sigma$ the corresponding standard deviation.

\section{Numerical examples}\label{sec:results}

\begin{table}
\centering
\caption{Numerical examples for evaluating the algorithms. Here, $k = 10^3$, $M = 10^6$ and $B = 10^9$.} \label{tab:dataset}
    \begin{tabular}[t]{c c c c p{0.4\textwidth}}\toprule
        Name & Digraph? & Nodes & Edges & Description\\\midrule
        \texttt{smallworld-<n>} & No & $2^n$ & $10 \times 2^n$ & Random graph based on the Watts-Strogatz model~\cite{watts_smallworld}. Each edge has a $10\%$ rewiring chance. \\\midrule
        \texttt{kronecker-<n>} & No & $\sim 2^n$ & $\sim 16 \times 2^n$ & Kronecker graph used by the Graph500 benchmark \cite{graph500,leskovec_kronecker_2010}.  \\\midrule
        \texttt{yeast} & No & $2114$ & $4480$ & Protein interaction network for yeast~\cite{bu_topological_2003,jeong_lethality_2001,vladimir_pajek_2006}. \\\midrule
        \texttt{power-us} & No & $4941$ & $13$k & Topological representation of the power grid of the western states in the US \cite{newman_network_2013,watts_smallworld}.\\\midrule
        \texttt{internet} & No & $23$k & $96$k & Symmetrized snapshot of the structure of the internet at the level of autonomous systems circa 2006 \cite{newman_network_2013}. \\\midrule
        \texttt{cond-mat} & No & $40$k & $351$k & Collaboration network of scientists in the field of condensed matter from 1995 to 2005 \cite{newman_network_2013,newman_structure_2003}.   \\\midrule
        \texttt{twitch} & No & $168$k & $6.8$M & Social network of Twitch users in Spring 2018~\cite{leskovec_snap,rozemberczki_twitch_2021}. \\\midrule
        \texttt{stanford} & Yes & $281$k & $2.3$M & Web graph of the Stanford University domain in 2002 \cite{leskovec_snap,leskovec_community_2008}. \\\midrule
        \texttt{orkut} & No & $3.1$M & $117$M & Social network of Orkut users in 2007 \cite{leskovec_snap,mislove_measurement_2007} \\\midrule
        \texttt{uk-2005} & Yes & $39.5$M & $936$M & Web graph of the \textit{.uk} domain in 2005 \cite{boldi_ubicrawler_2004,boldi_layered_2011,boldi_webgraph_2004}.    \\\midrule
        \texttt{twitter} & Yes & $42.6$M & $1.47$B & Social network of Twitter users in 2009 \cite{leskovec_snap,yang_patterns_2011}.  \\\bottomrule
    \end{tabular}
\end{table}

To illustrate the applicability of our method, we compute the subgraph centrality and total communicability of several complex networks using the matrix exponential $e^{\gamma \mat{A}}$ with $\gamma \in [0, 1]$. Due to the random nature of the Monte Carlo algorithms, all results reported in this section correspond to the mean value of $10$ runs of the algorithm using different random seeds.

All numerical simulations were executed on a commodity server with an AMD EPYC $9554$P $64$C $3.75$GHz and $256$GB of RAM, running Fedora~38. All randomized algorithms were implemented in \texttt{C++} using OpenMP. The code was compiled with AMD~AOCC~v4.0 with the \texttt{-O3} and \texttt{-march=znver4} flags. Our implementation uses the \texttt{PCG64DXSM} \cite{oneill_pcg_2014} random number generator.

The algorithms were tested using two synthetic graphs - \texttt{smallworld} and \texttt{kronecker} - as well as a set of networks extracted from real-world data, which are described in Table \ref{tab:dataset}. Note that, before calculating the subgraph centrality and total communicability of directed graphs, their adjacency matrix $\mat{A}$  must symmetrized as
\begin{equation}
\label{eq:digraph_sym}
\mat{B} = \begin{bmatrix}  0 & \mat{A} \\ \mat{A}^\intercal & 0 \end{bmatrix}
\end{equation}
in order to split apart the outgoing and incoming edges of the graph \cite{benzi_ranking_2013}. We also remove all duplicated edges, loops and disconnected nodes from the  \texttt{kronecker} graph generated by the Graph500 code \cite{graph500}.

\subsection{Numerical errors and accuracy}

\begin{figure}[t]
\centering
\label{fig:acc_samples}
    \begin{subfigure}{0.49 \textwidth}
        \centering
        \includegraphics[width=\linewidth]{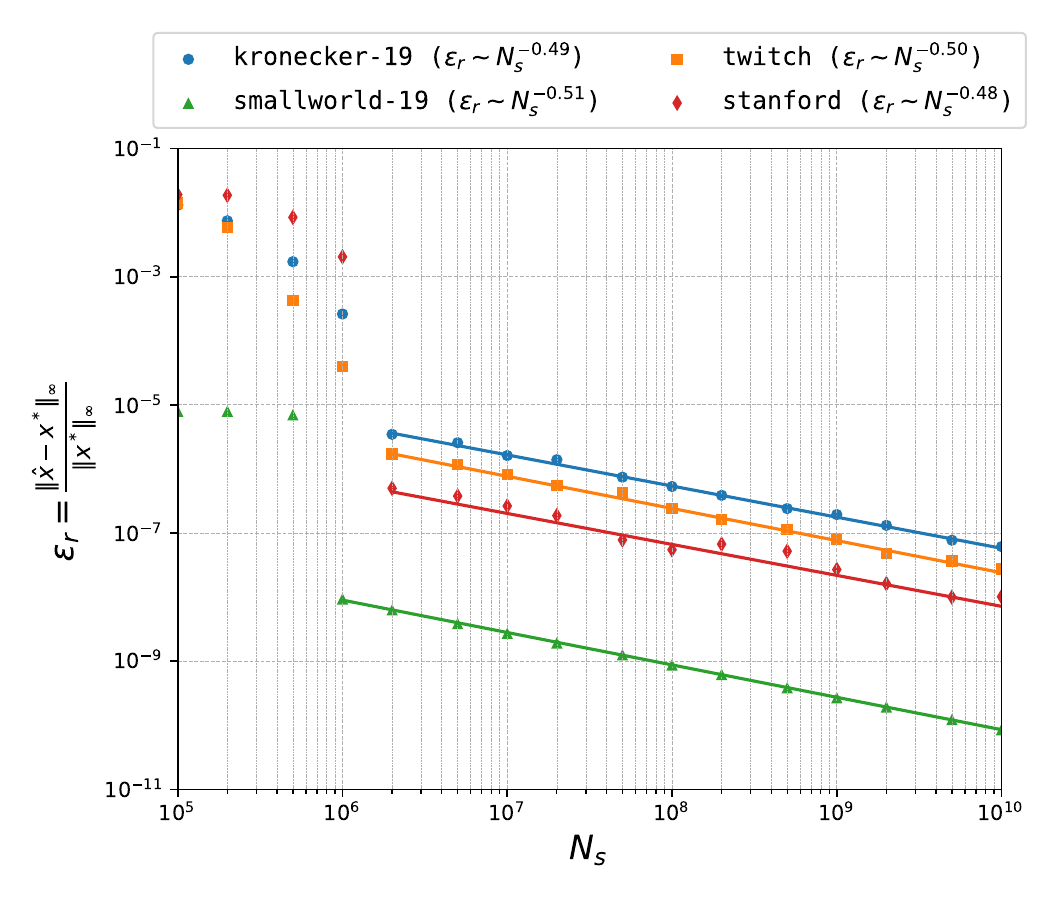}
        \caption{\texttt{RandFunmDiag} ($\gamma = 10^{-3}$).}
        \label{fig:subgraph_samples}
    \end{subfigure}
    \hfill
    \begin{subfigure}{0.49 \textwidth}
        \centering
        \includegraphics[width=\linewidth]{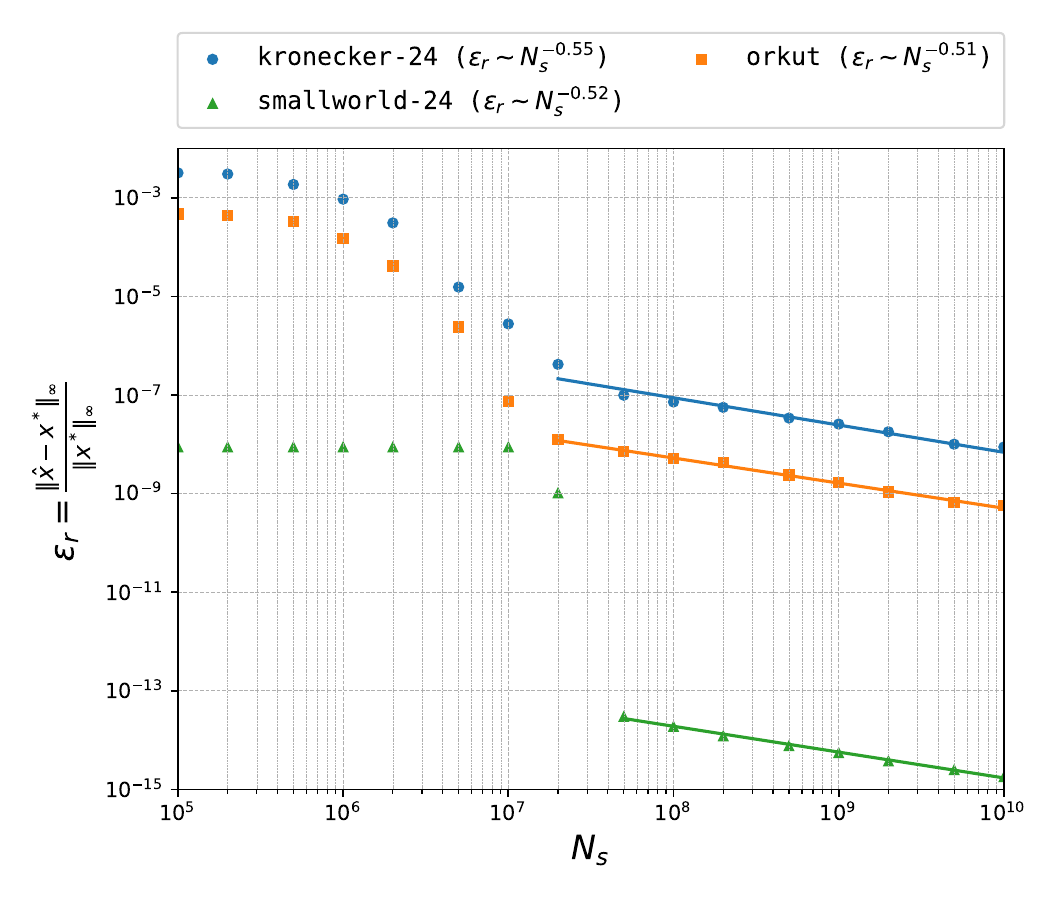}
        \caption{\texttt{RandFunmAction} ($\gamma = 10^{-5}$).}
        \label{fig:total_comm_samples}
    \end{subfigure}
    \caption{Relative $\ell_\infty$ error as function of the number of random walks $N_s$ for $W_c = 10^{-6}$. The degree $\gamma$ of the polynomial of the fitting curve is indicated as $\varepsilon_r \sim N_s^\gamma$. }
\end{figure}

\begin{table}[t]
\centering
\caption{Relative $\ell_\infty$ error for $N_s = 10^{8}$ and $W_c = 10^{-6}$ as well as the standard measurement error obtained after several independent runs.}
\label{tab:accuracy}
    \begin{subtable}[t]{0.49 \textwidth}
        \centering
        \caption{\texttt{RandFunmDiag} ($\gamma = 10^{-3}$).}
        \label{tab:subgraph_acc}
        \begin{tabular}[c]{lc} \toprule
         & $\varepsilon_r$ \\ \midrule
        \texttt{kronecker-19} & $(1.94 \pm 0.14) \times 10^{-7}$ \\\midrule
        \texttt{twitch} & $(8.09 \pm 0.67) \times 10^{-8}$ \\\midrule
        \texttt{smallworld-19} & $(2.70 \pm 0.04) \times 10^{-10}$ \\\midrule
        \texttt{stanford} & $(2.70 \pm 0.45) \times 10^{-8}$ \\\bottomrule
        \end{tabular}
    \end{subtable}
    \hfill
    \begin{subtable}[t]{0.49 \textwidth}
        \centering
        \caption{\texttt{RandFunmAction} ($\gamma = 10^{-5}$).}
        \label{tab:total_comm_acc}
        \begin{tabular}[c]{lc} \toprule
         & $\varepsilon_r$ \\ \midrule
        \texttt{kronecker-24} & $(2.57 \pm 0.26) \times 10^{-8}$\\ \midrule
        \texttt{orkut} & $(1.67 \pm 0.09) \times 10^{-9}$\\ \midrule
        \texttt{smallworld-24} & $(5.59 \pm 0.16) \times 10^{-15}$ \\\bottomrule
        \end{tabular}
    \end{subtable}
\end{table}

Figs. \ref{fig:subgraph_samples} and \ref{fig:total_comm_samples} show the relative $\ell_\infty$ error of our method as function of $N_s$. Recall from Section \ref{sec:rand-full} that the algorithm assigns \textit{a priori} $N_j \sim \|C_j \|_2$ random walks to each node $j$ of the graph. Therefore, if $N_s$ is too low, very few random walks will be assigned to a node $i$ with a low norm, such that its centrality score is basically approximated by just the first two terms in the series, i.e., $SC(i) \approx 1 +  a_{ii}$ and $TC(i) \approx 1 + (\mat{A} \vec{1})_i$. For this reason, the relative errors are highly dependent on the structure of the graph. In particular, the nodes in the  \texttt{smallworld} network have very similar probabilities, and therefore it can happen that for $N_s < n$ some nodes will have no chance to be randomly chosen. Only when the sample size is sufficiently large, the algorithm can properly estimate the centrality of every node. In this scenario, the numerical error of the algorithm scales with $O(N_s^{-0.5})$, similar to other probabilistic methods. This relation is confirmed numerically by the trend lines in Fig. \ref{fig:subgraph_samples} and \ref{fig:total_comm_samples}, which has a slope of approximately $-0.5$ in the logarithmic scale. Table~\ref{tab:accuracy} shows the relative $\ell_\infty$ error for a fixed number of samples. The table also shows the standard measurement error obtained after several independent runs.

\begin{figure}[t]
\centering
\label{fig:acc_wc}
    \begin{subfigure}{0.49 \textwidth}
        \centering
        \includegraphics[width=\linewidth]{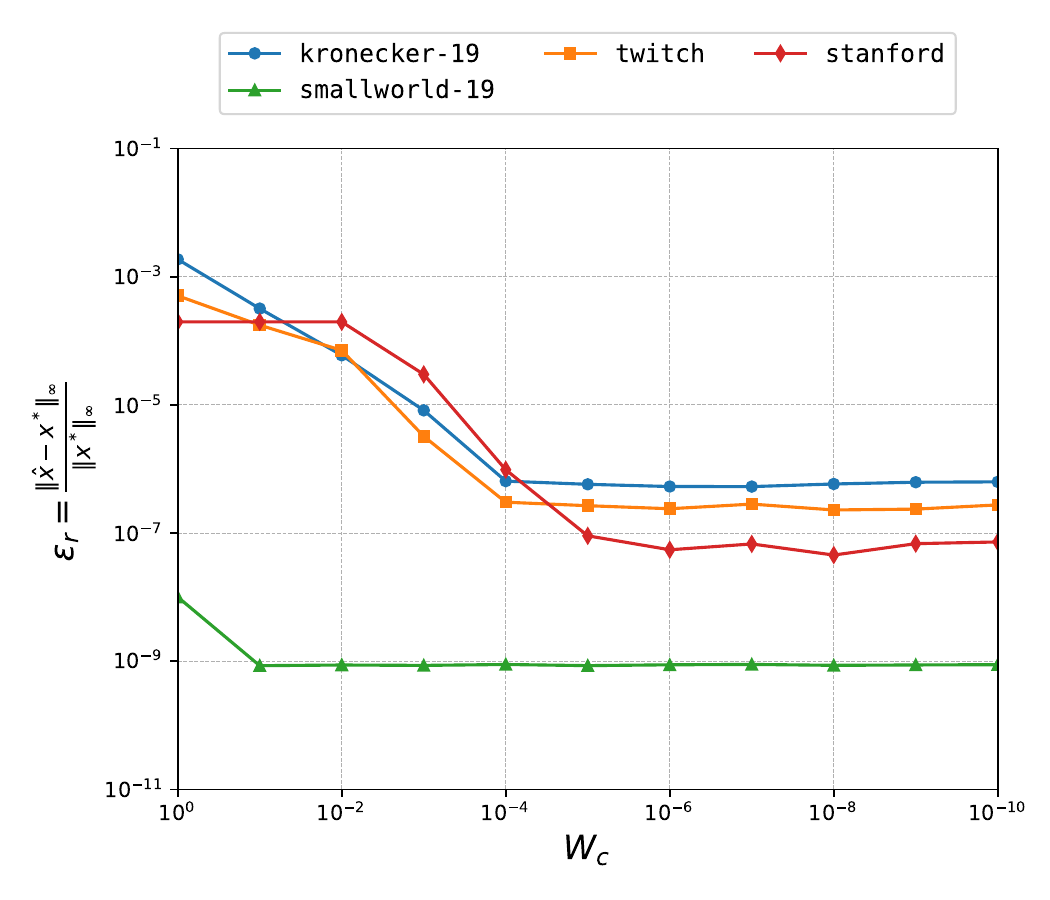}
        \caption{\texttt{RandFunmDiag} ($\gamma = 10^{-3}$).}
        \label{fig:subgraph_wc}
    \end{subfigure}
    \hfill
    \begin{subfigure}{0.49 \textwidth}
        \centering
        \includegraphics[width=\linewidth]{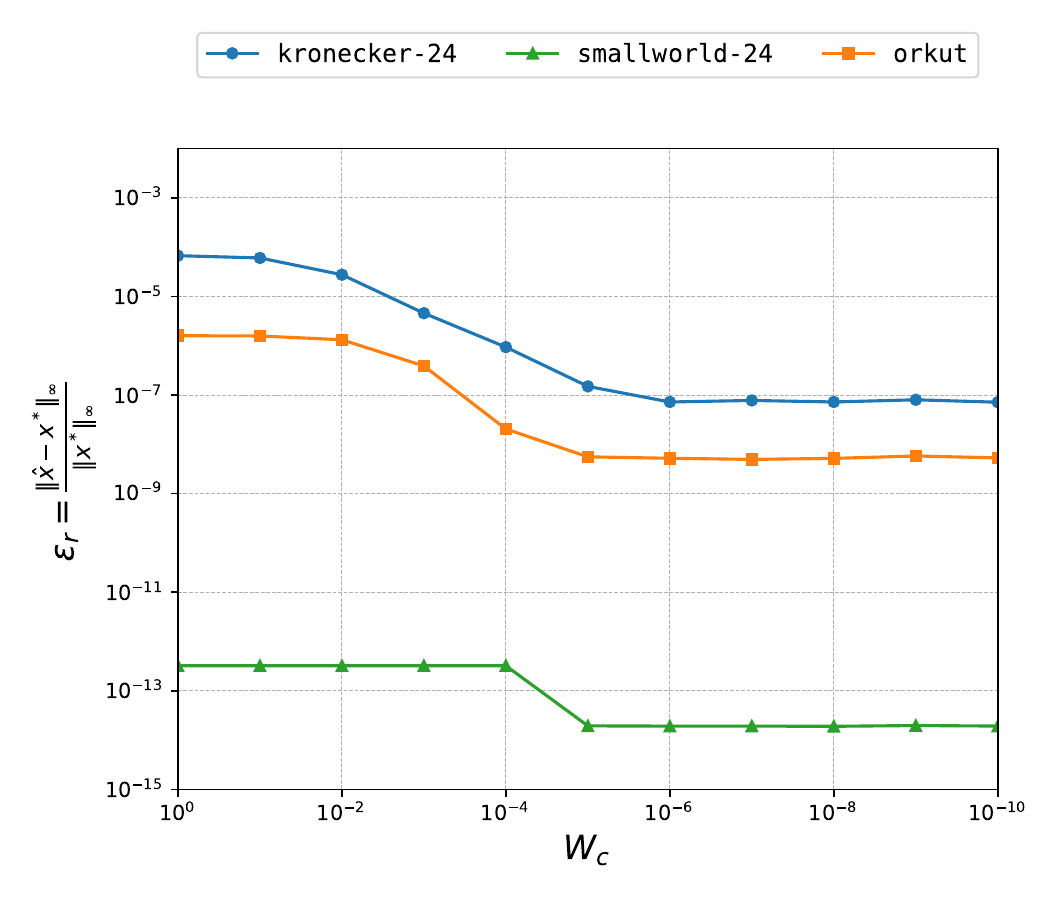}
        \caption{\texttt{RandFunmAction} ($\gamma = 10^{-5}$).}
        \label{fig:total_comm_wc}
    \end{subfigure}
    \caption{Relative $\ell_\infty$ error as function of the weight cutoff $W_c$ for $N_s = 10^{8}$.}
\end{figure}

In Figs. \ref{fig:subgraph_wc} and \ref{fig:total_comm_wc} we show the relative $\ell_\infty$ error as function of $W_c$. The value of $W_c$ controls the length of the random walks in terms of the number of steps, which is related to the number of terms of the power series expansion. When the value of $W_c$ is large, the algorithm stops the random walk generation too early, leading to large errors. On the other hand, when the value of $W_c$ is small, the algorithm continues generating the random walk for longer steps than necessary. Note that this increases the computational cost without necessarily improving the accuracy of the method. In fact, we have to consider also the statistical error, which depends on the number of generated random walks. According to Figs. \ref{fig:subgraph_wc} and \ref{fig:total_comm_wc}, the optimal value for $W_c$ is around $10^{-6}$ for these networks.

\begin{figure}[t]
\centering
\label{fig:acc_wc}
    \begin{subfigure}{0.49 \textwidth}
        \centering
        \includegraphics[width=\linewidth]{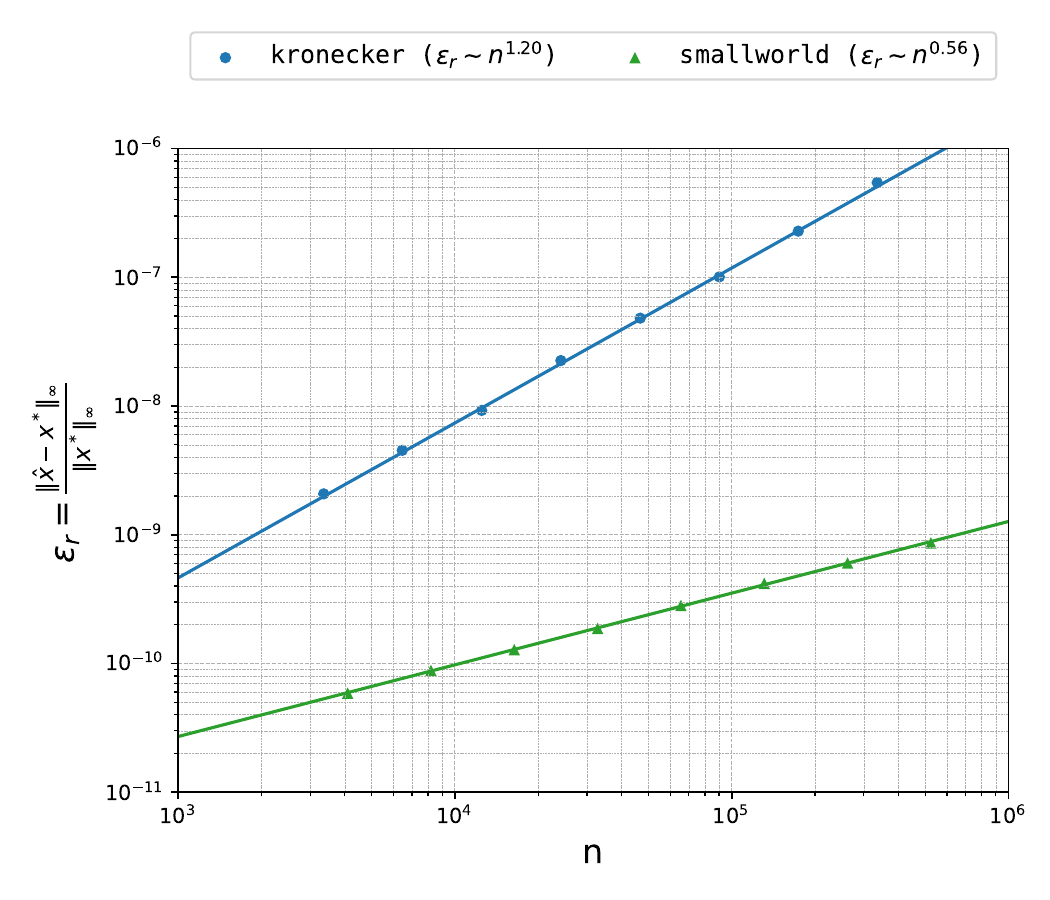}
        \caption{\texttt{RandFunmDiag} ($\gamma = 10^{-3}$).}
        \label{fig:subgraph_acc_scale}
    \end{subfigure}
    \hfill
    \begin{subfigure}{0.49 \textwidth}
        \centering
        \includegraphics[width=\linewidth]{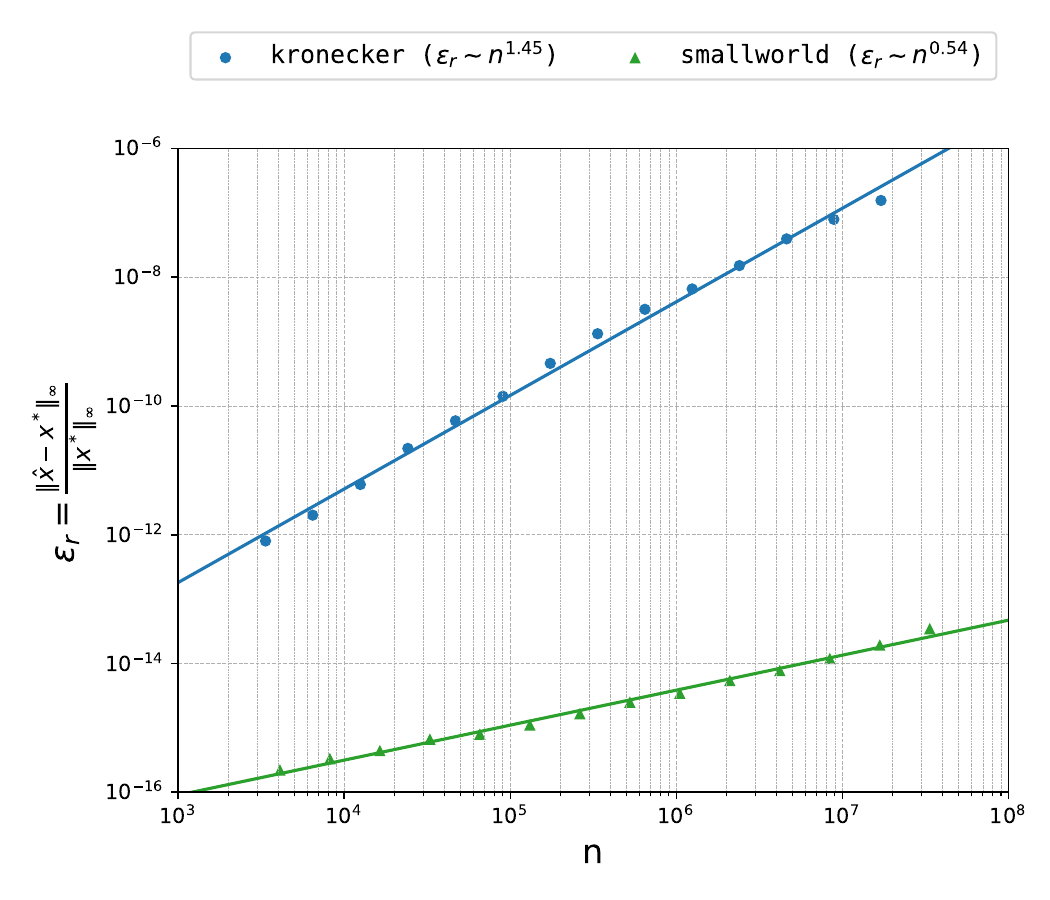}
        \caption{\texttt{RandFunmAction} ($\gamma = 10^{-5}$).}
        \label{fig:total_comm_acc_scale}
    \end{subfigure}
    \caption{Relative $\ell_\infty$ error as function of the number of nodes $n$ of the graph considering $W_c = 10^{-6}$ and $N_s = 10^{8}$. The degree $\gamma$ of the polynomial of the fitting curve is indicated as $\varepsilon_r \sim n^\gamma$.}
\end{figure}

Figs. \ref{fig:subgraph_acc_scale} and \ref{fig:total_comm_acc_scale} shows how the relative $\ell_\infty$ error grows with the size of the \texttt{smallworld} and \texttt{kronecker} networks. The \texttt{smallworld} network starts as a ring lattice with $n$ nodes, each connected to $10$ neighbours. The algorithm then rewires each edge in the lattice with a $10\%$ chance, i.e., the edge $(i, j)$ is replaced by $(i, k)$ where $k$ is chosen at random from all the possible nodes in the network such that there are no loops or duplicated edges. As a result, the random walks have very similar weights $W^{(k)}$ independently of the sequence of nodes visited. In other words, the norm of the covariance matrix $\sigma^2 = \|\mat{\Theta}\|_\infty$ is very low.  Considering that $N_s$ is fixed, there are fewer random walks to estimate the centrality score of each node as the graph size increases, which reduces the precision of the algorithm by a factor of $\sqrt{n}$, as shown in Figs. \ref{fig:subgraph_acc_scale} and \ref{fig:total_comm_acc_scale}. This is in line with the theoretical results from Section \ref{sec:convergence}, where $\varepsilon_r \sim \sigma^2 N_s^{-0.5}$.

In contrast, the nodes in the \texttt{kronecker} graph are organized hierarchically. At the top level, there is a single node acting as the central hub for the entire graph. As we move down the hierarchy, there are more nodes per level, but they have fewer connections. The number of levels in the hierarchy as well as the number of nodes and connections at each level are determined by the size of the graph. Therefore, the weight of the random walks can vary greatly depending on which nodes are visited and their position in the hierarchy. Larger graphs have a higher covariance norm $\sigma^2$ due to a wider degree difference between nodes.

\subsection{Comparison with other methods}

There are a few algorithms available in the literature for computing the matrix exponential. Perhaps the most well-known scheme is the \texttt{expm} routine from MATLAB \cite{al-mohy_new_2010,higham_scaling_2005,higham_functions_2008}. The method first scales the matrix $\mat{A}$ by a power of $2$ to reduce the norm to order $1$, calculates the Pad\'e approximant of the matrix exponential and then repeatedly squares the result to recover the original exponent. For a generic $n \times n$ matrix, \texttt{expm} requires $O(n^3)$ arithmetic operations and an additional $O(n^2)$ space in memory. \texttt{expm} calculate the entire matrix~$e^{\gamma \mat{A}}$.

To rank the nodes using the subgraph centrality, we only need to calculate the diagonal entries of $e^{\gamma \mat{A}}$, not the complete matrix. Methods for estimating the individual entries of the matrix function have been proposed by Golub, Meurant and others~\cite{benzi_quadrature_2010,fenu_block_2013,golub_matrices_2009} and are based on  Gaussian quadrature rules and the Lanczos algorithm. They require $O(n)$ operations to determine each diagonal entry, resulting in a total cost of $O(n^2)$ to calculate the subgraph centrality for all nodes. In practice, this method may suffer from numerical breakdowns when $\mat{A}$ is large and sparse~\cite{bai_able_1999,fenu_block_2013}. For this reason, it is often restricted to estimate only the $k$ most important nodes in the graph.

Likewise, the total communicability only requires the action of $f(\mat{A})$ over a vector setting $\vec{v} = \vec{1}$, which can be computed efficiently using either a polynomial or rational Krylov method \cite{afanasjew_implementation_2008,eiermann_restarted_2006,guttel_rational_2013,guttel_limitedmemory_2020}. These methods consist in generating a Krylov basis using the input matrix and then evaluating the function over the projected matrix through some direct method, such as \texttt{expm}. Assuming a sparse matrix with $N_{nz}$ nonzeros and a Krylov basis with $m$ vectors, the computational cost is $O(m N_{nz})$. In particular, we compared our method against the restarted polynomial Krylov \cite{afanasjew_implementation_2008,eiermann_restarted_2006} from the \texttt{funm\_kryl} toolbox~\cite{guttel_funm_kryl}.

While writing this article, G\"uttel and Schweitzer published a preprint \cite{guttel_randomized_2023} proposing two new randomized algorithms -- \texttt{sFOM} and \texttt{sGMRES} -- for estimating $f(\mat{A})\vec{v}$. Here, we focus on \texttt{sFOM} since \texttt{sGMRES} works best with Stieltjes functions and requires a numerical quadrature rule for all the other functions. \texttt{sFOM} first creates a random \textit{sketch} of the $\mat{A}$ and then uses an incomplete Arnoldi decomposition to generate a non-orthogonal Krylov basis from this \textit{sketch}. However, the basis may be ill-conditioned, and thus, for stabilizing the method, it is required to compute a thin QR decomposition (also called \textit{whitening} \cite{nakatsukasa_fast_2022}) of the basis before evaluating the matrix function over the projected matrix. The computational cost of \texttt{sFOM} is $O(N_{nz} m \log{m} + m^3)$.

Another preprint by Cortinovis, Kressner and Nakatsukasa \cite{cortinovis_speeding_2023} was also published recently proposing a different randomization strategy for the Krylov method. They propose an Arnoldi-like decomposition to iteratively build the non-orthogonal Krylov basis $\mat{V}_m$ using only random \textit{sketches} of the basis vectors. Again, the method may apply a \textit{whitening} \cite{nakatsukasa_fast_2022} to improve the condition number of the basis. Afterwards, the program solves a least-square problem to obtain the projected matrix. We will denote this algorithm as \texttt{rand\_kryl} and it has an overall computational cost of $O(N_{nz} m^2 + m^3)$.

The \texttt{MC} represents the Monte Carlo method adapted from \cite{benzi_analysis_2017} as described in Algorithm \ref{code:classical_mc_full}. Similar to our randomized algorithm, the length of the random walks depends on the weight cutoff $W_c$. It has a computational cost of $O(m N_s)$, where $m$ denotes the average number of steps in the random walk, and does not require additional space in memory. This method can be modified to calculate $f(\mat{A})\vec{v}$ or $f(\mat{A})_{ii}$ instead of the full matrix function. Note that for the latter, the method still computes the full $f(\mat{A})$, but discards all the off-diagonal entries.

\begin{figure}[t]
 \centering
 \includegraphics[width=\linewidth]{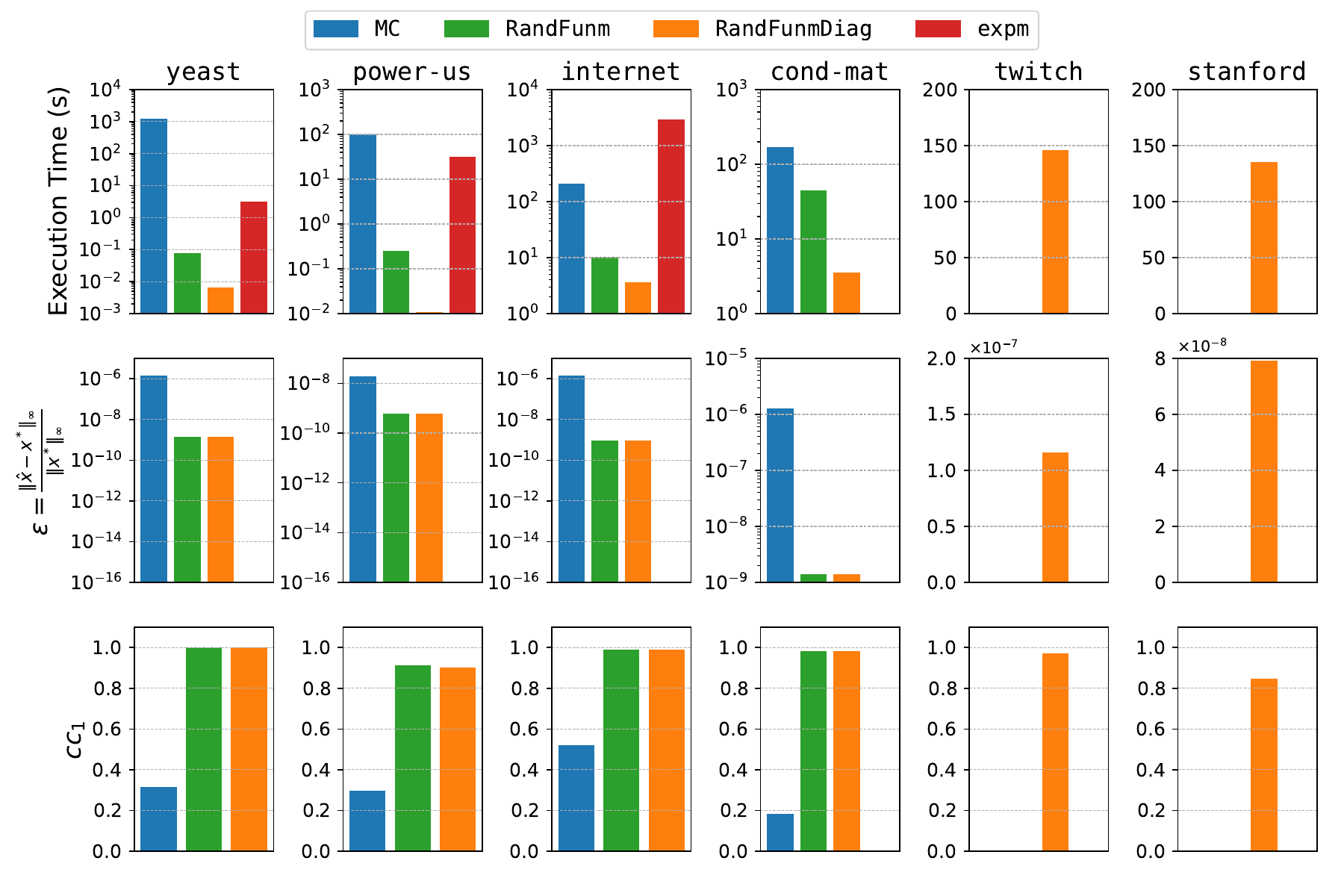}
 \caption{Comparison between different algorithms when calculating the subgraph centrality for real networks and $\gamma = 10^{-3}$. Here, we consider the centrality scores generated by our algorithm with $N_s = 10^{11}$ as reference.}
 \label{fig:subgraph_real_speed}
\end{figure}

Fig. \ref{fig:subgraph_real_speed} compares the serial execution time and accuracy among the different methods when computing the subgraph centrality. The graphs are sorted according to the number of nodes. The similarities between the two node rankings are measured using the Pearson correlation coefficient \cite{benesty_pearson_2009}. Here, $cc_1$ denotes the correlation coefficient between the top $1\%$ nodes between a reference list and the ranking obtained by the algorithm. Note that, if two or more nodes have similar centrality scores,  numerical deviations can alter their ranking order, lowering the correlation coefficient. Nevertheless, these nodes have similar importance within the network, and thus, their order in the ranking may not be relevant to understanding the dynamics of the graph.

Although \texttt{expm} reaches machine precision, it cannot be used for large graphs due to its hefty computational cost and cubic scaling. In fact, the \texttt{cond-mat} graph with $40$k nodes is already too large for \texttt{expm} and cannot be executed in a reasonable amount of time. The \texttt{MC} requires a very large number of random walks to estimate the subgraph centrality as it only updates a single entry of the matrix exponential at a time, and it is more likely for this entry to be outside the diagonal if the matrix is very large. For this reason, the accuracy of \texttt{MC} is quite poor even with a large number of random walks. Both \texttt{RandFunm} and \texttt{RandFunmDiag} have the same accuracy and correlation since the core algorithm is the same. However, \texttt{RandFunmDiag} does not require the computation of the full matrix product at the end of the algorithm, resulting in a speedup between $3$ to $24$ over \texttt{RandFunm}. Moreover, the full matrix exponential of the \texttt{twitch} and \texttt{stanford} graphs are too large to be fully represented in memory, and thus, their subgraph centrality can only be calculated by \texttt{RandFunmDiag}. The randomized algorithms also show a very high correlation for top-$1\%$ nodes in the ranking compared with the reference list.

\begin{figure}[t]
 \centering
 \includegraphics[width=\linewidth]{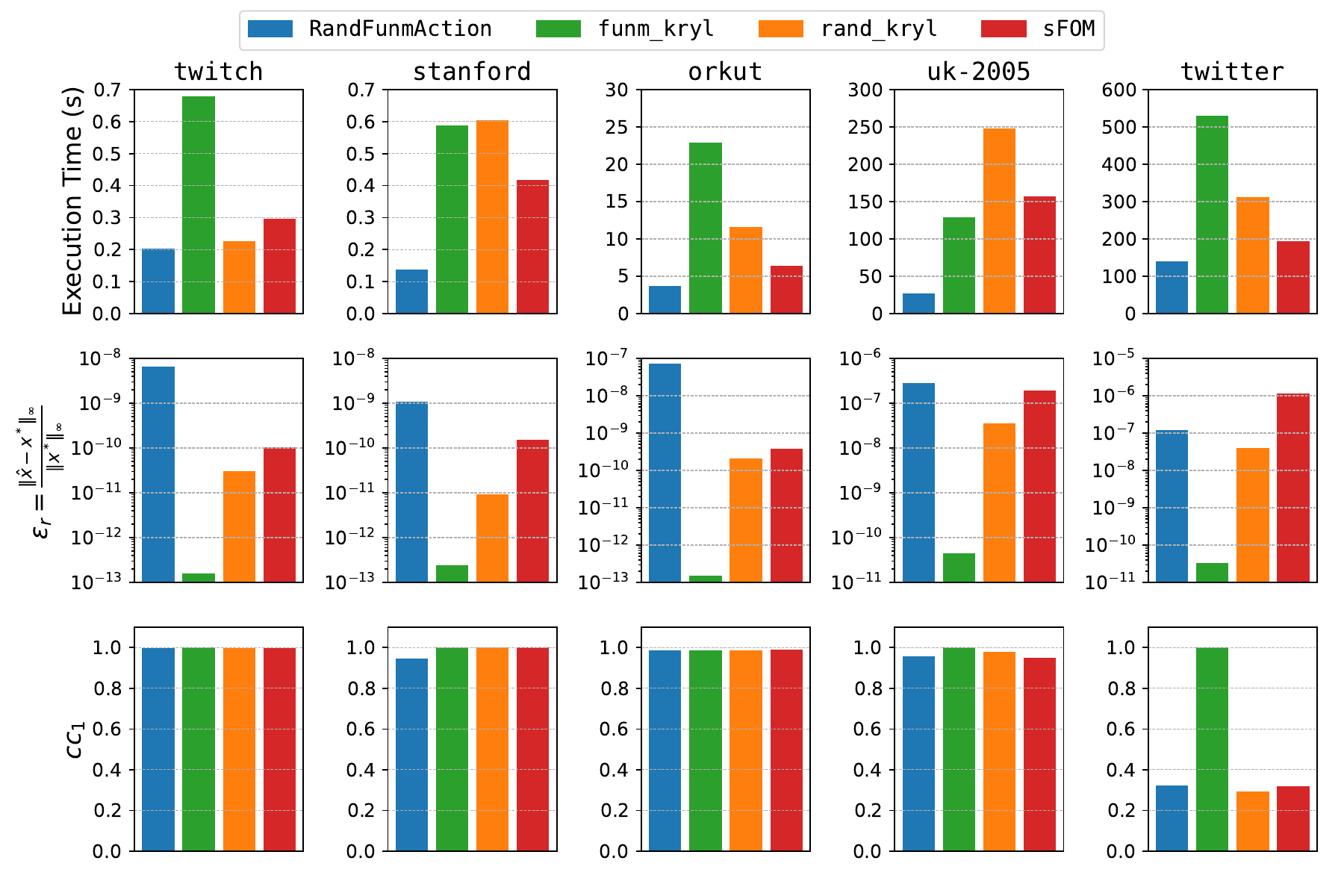}
 \caption{Comparison between different algorithms when computing the total communicability for real networks and $\gamma = 10^{-5}$. Here, we consider the centrality scores generated by \texttt{expmv}\cite{al-mohy_computing_2011} as reference.}
 \label{fig:total_comm_real_speed}
\end{figure}

Fig. \ref{fig:total_comm_real_speed} compares \texttt{RandFunmAction} against all Krylov-based methods in terms of the serial execution time and accuracy. The tolerance of \texttt{funm\_kryl} was set to $10^{-8}$ and the size of the Krylov basis of \texttt{sFOM} and \texttt{rand\_kryl} was set to $4$, such that all algorithms have comparable precision. We choose a small value of $\gamma$ to avoid overflow as we are working with large positive matrices, and consequently, all algorithms converge very fast to target precision.

The stopping criterion of \texttt{funm\_kryl} is well-known to be pessimistic \cite{guttel_limitedmemory_2020}, resulting in much higher precision than the target at the cost of higher execution times. In some networks (\texttt{twitch}, \texttt{orkut} and \texttt{twitter}), \texttt{sFOM} and \texttt{rand\_kryl} outperformed
\texttt{funm\_kryl} mainly due to the smaller basis, while in others (\texttt{uk-2005}), the additional cost associated with the \textit{sketching}, \textit{whitening}, least square QR and other operations lead to significant slowdowns. Considering the accuracy difference, the randomization in the Krylov method does not seem to be very effective when the basis is relatively small. \texttt{RandFunmAction} shows the best performance among all algorithms, in particular, for the \texttt{twitter} network, being $3.8\times$ faster than \texttt{funm\_kryl}, while \texttt{sFOM} and \texttt{rand\_kryl} are $2.7\times$ and $1.7\times$ faster, respectively.

For complex networks, it suffices for the algorithm to be sufficiently accurate to differentiate the centrality score between all nodes in the graph, there is no benefit in having higher accuracy. Indeed, the ranking produced by \texttt{RandFunmAction} has a correlation greater than $0.95$ for the top $1\%$ nodes despite having a lower accuracy than the others. The only exception is the \texttt{twitter} network. As a massive social network, there is no clear structure or hierarchy in the graph, such that many of them have similar centrality scores and small numerical variations can drastically change the rank order. In comparison, \texttt{uk-2005} is an equally large web graph that follows a more clear structure with a well-defined hub and authorities, and thus, it is less susceptible to noise. For this reason, the correlation for the \texttt{twitter} network is much lower than other graphs and requires greater precision to differentiate the nodes. Note that the top-$1\%$ in the ranking contains almost 1 million nodes in both the \texttt{twitter} and \texttt{uk-2005} networks with a wide range of centrality scores. If we consider only the top-$0.1\%$, the correlation of \texttt{RandFunmAction} for the \texttt{twitter} network increases to $0.79$.

\begin{figure}[t]
\centering
\label{fig:subgraph_scale_speed}
    \begin{subfigure}{0.49 \textwidth}
        \centering
        \includegraphics[width=\linewidth]{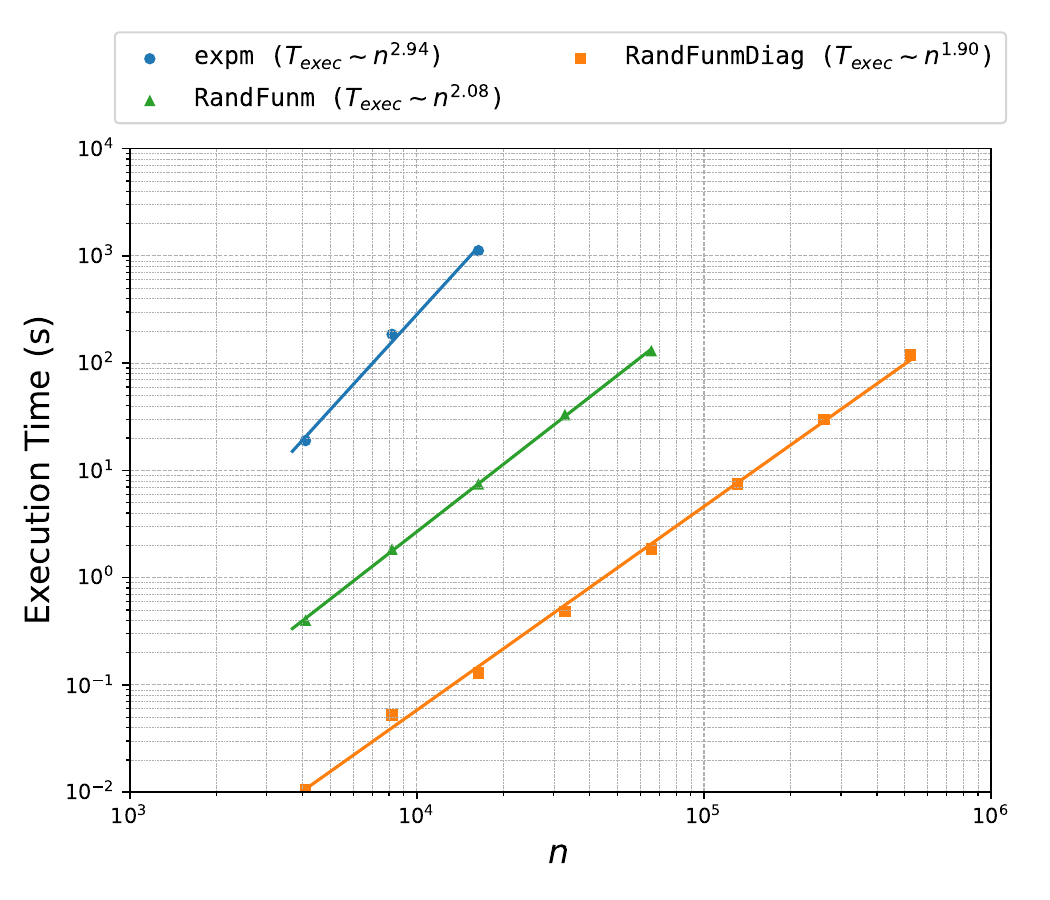}
        \caption{\texttt{smallworld} ($\varepsilon_r \leq 10^{-9}$).}
        \label{fig:subgraph_smallworld_speed}
    \end{subfigure}
    \hfill
    \begin{subfigure}{0.49 \textwidth}
        \centering
        \includegraphics[width=\linewidth]{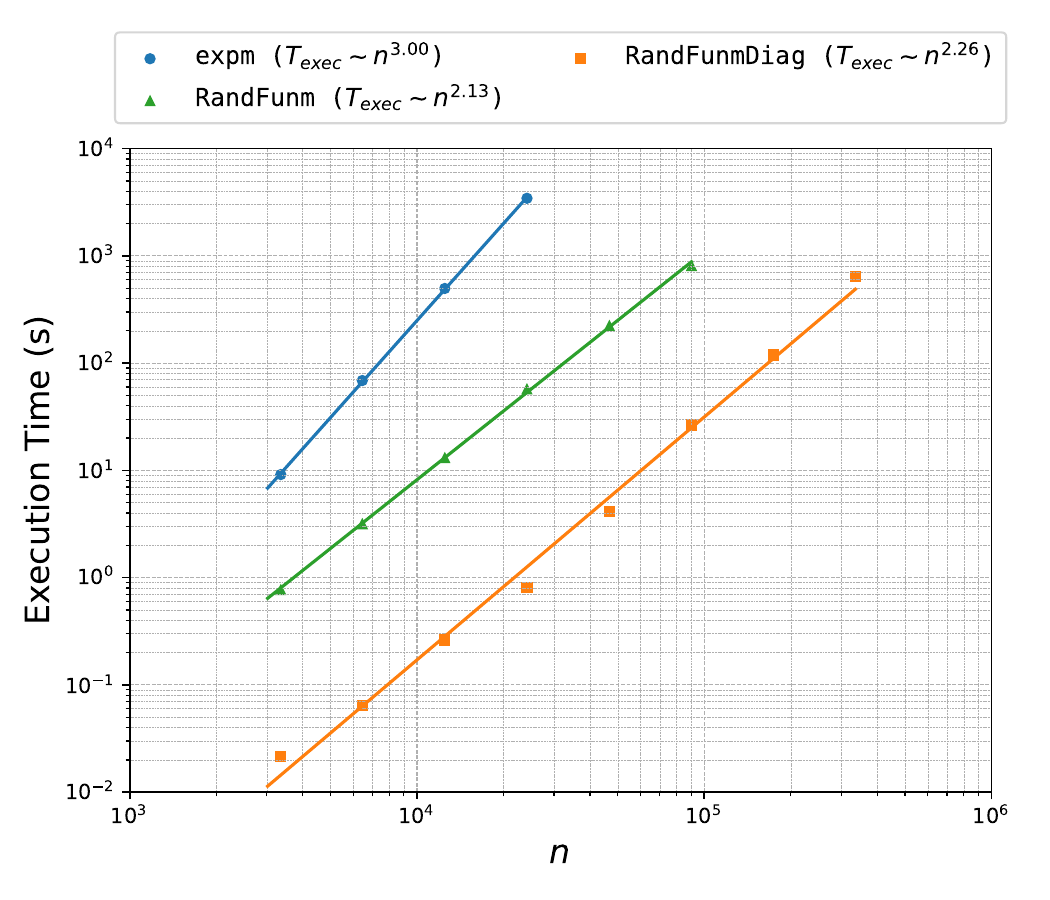}
        \caption{\texttt{kronecker} ($\varepsilon_r \leq 10^{-7}$).}
        \label{fig:subgraph_kronecker_speed}
    \end{subfigure}
    \caption{Elapsed time for computing the subgraph centrality as a function of the number of nodes $n$ for a fixed accuracy $\varepsilon_r$ and $\gamma = 10^{-3}$.}
\end{figure}

\begin{figure}[t]
\centering
\label{fig:total_comm_scale_speed}
    \begin{subfigure}{0.49 \textwidth}
        \centering
        \includegraphics[width=\linewidth]{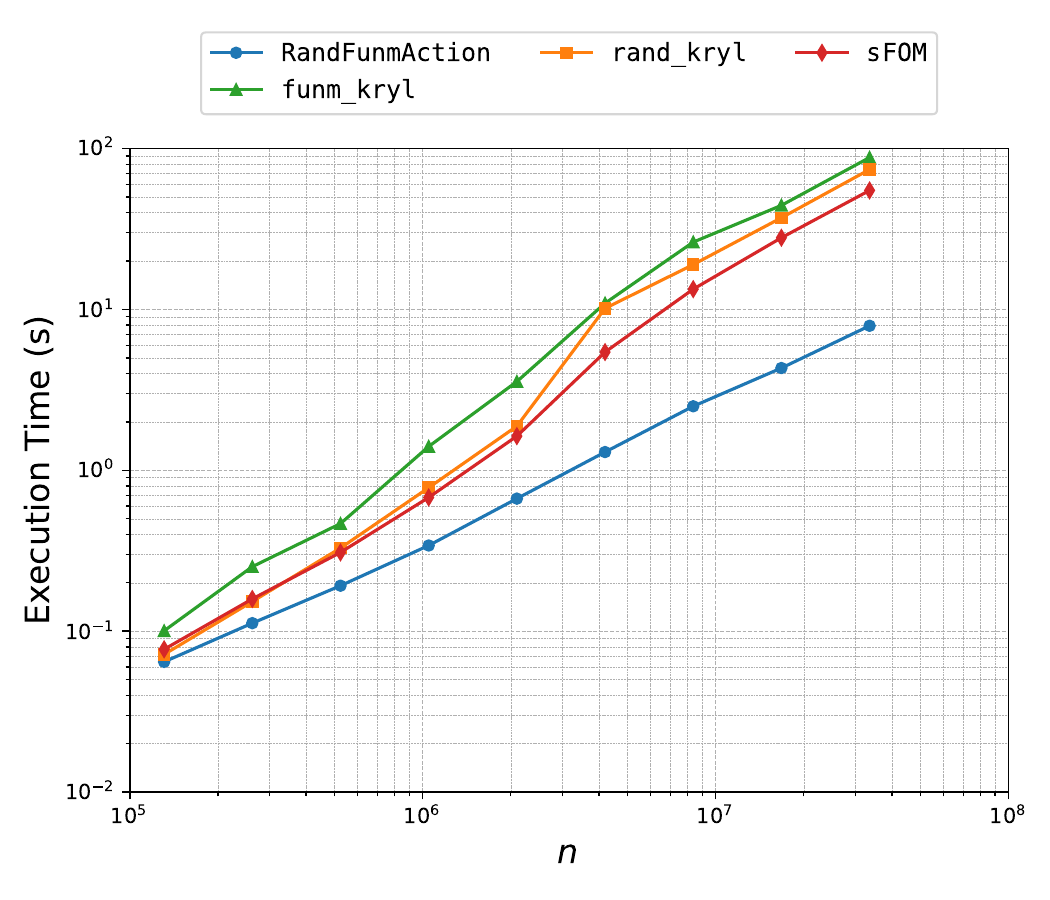}

        \caption{\texttt{smallworld} ($\varepsilon_r \leq 10^{-14}$).}
        \label{fig:total_comm_smallworld_speed}
    \end{subfigure}
    \hfill
    \begin{subfigure}{0.49 \textwidth}
        \centering
        \includegraphics[width=\linewidth]{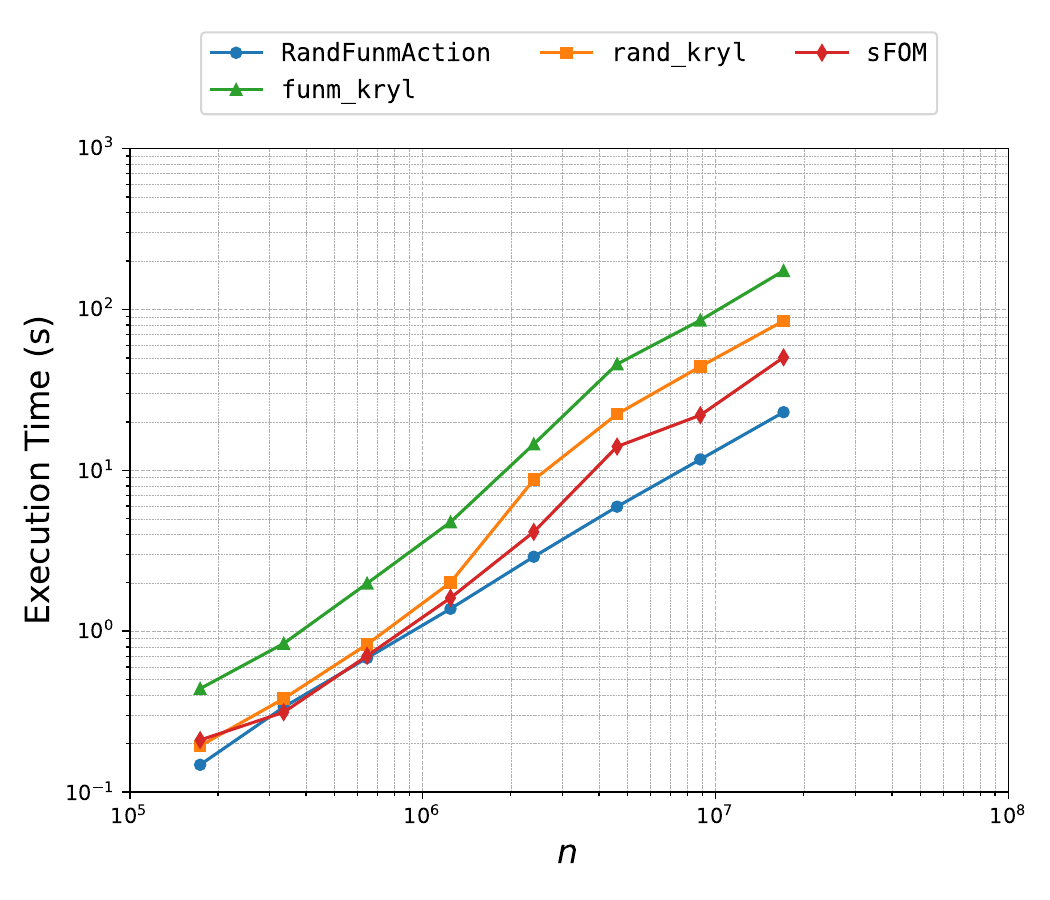}
        \caption{\texttt{kronecker} ($\varepsilon_r \leq 10^{-6}$).}
        \label{fig:total_comm_kronecker_speed}
    \end{subfigure}
    \caption{Elapsed time for computing the total communicability as a function of the number of nodes $n$ for a fixed accuracy $\varepsilon_r$ and $\gamma = 10^{-5}$.}
\end{figure}

Figs. \ref{fig:subgraph_kronecker_speed}, \ref{fig:subgraph_smallworld_speed}, \ref{fig:total_comm_kronecker_speed} and \ref{fig:total_comm_smallworld_speed} show the elapsed serial time as a function of the number of nodes for the \texttt{kronecker} and \texttt{smallworld} networks. The computational cost of the \texttt{expm} algorithm is of the order of $O(n^3)$. This is regardless of the sparsity or the distribution of the nonzeros of the matrix since it was originally proposed for dense matrices. The \texttt{MC} algorithm was too slow for the prescribed accuracy, and thus, it was not included in the graph. When $N_{nz} \sim n$ the computational cost of the \texttt{RandFunm} and \texttt{RandFunmDiag} algorithms become of order $O(n^2)$. This is because the computational cost of computing the matrix product is higher than generating random walks. Note that this cost is similar to other algorithms \cite{fenu_block_2013} for estimating diagonal entries of the matrix. However, the main advantage here is the smaller proportionality constant and the capability to compute the subgraph centrality for sparse and large matrices without worrying about a numerical breakdown \cite{fenu_block_2013}. Again, the \texttt{RandFunmDiag} algorithm is significantly faster than the \texttt{RandFunm} algorithm as it only requires the partial evaluation of the matrix product at the end of the algorithm. All Krylov-based methods scale linearly with $n$ as they rely on matrix-vector multiplications. Similar to the other Monte Carlo methods, \texttt{RandFunmAction} spends more time computing the matrix-vector product than generating the random walks, and thus, also scales linearly with $n$.

\subsection{Single entry}

\begin{table}
\centering
\caption{Relative error for calculating the total communicability of a single node $i = \argmax_j \sum_k |a_{jk}|$ considering $N_s = 10^8$, $W_c = 10^{-6}$ and $\gamma = 10^{-5}$.}
\label{tab:single_entry_acc}
\begin{tabular}[c]{lccc} \toprule
 & \texttt{stanford} & \texttt{orkut} & \texttt{kronecker-24}\\\midrule
\texttt{MC} & $1.48 \times 10^{-9}$ & $6.59 \times 10^{-8}$ & $1.33 \times 10^{-6}$ \\\midrule
\texttt{RandFunmAction} & $4.64 \times 10^{-11}$ & $3.87 \times 10^{-10}$ & $2.01 \times 10^{-8}$ \\\bottomrule
\end{tabular}
\end{table}

One of the main advantages of Monte Carlo algorithms is the ability to calculate a single entry of the solution without requiring the computation of the full solution. Table \ref{tab:single_entry_acc} shows the relative $\ell_\infty$ error for calculating the total communicability of the node with the highest degree. Both \texttt{RandFunmAction} and \texttt{MC} were modified to calculate a single entry of the solution as efficiently as possible. Due to the ability to sample entire rows and columns, \texttt{RandFunmAction} produces a much better approximation for $(f(\mat{A})v)_i$ than \texttt{MC} for the same number of random walks, independently of the network.

\subsection{Parallel Performance}

\begin{figure}[t]
\centering
\label{fig:subgraph_scale_speed}
    \begin{subfigure}{0.49 \textwidth}
        \centering
        \includegraphics[width=\linewidth]{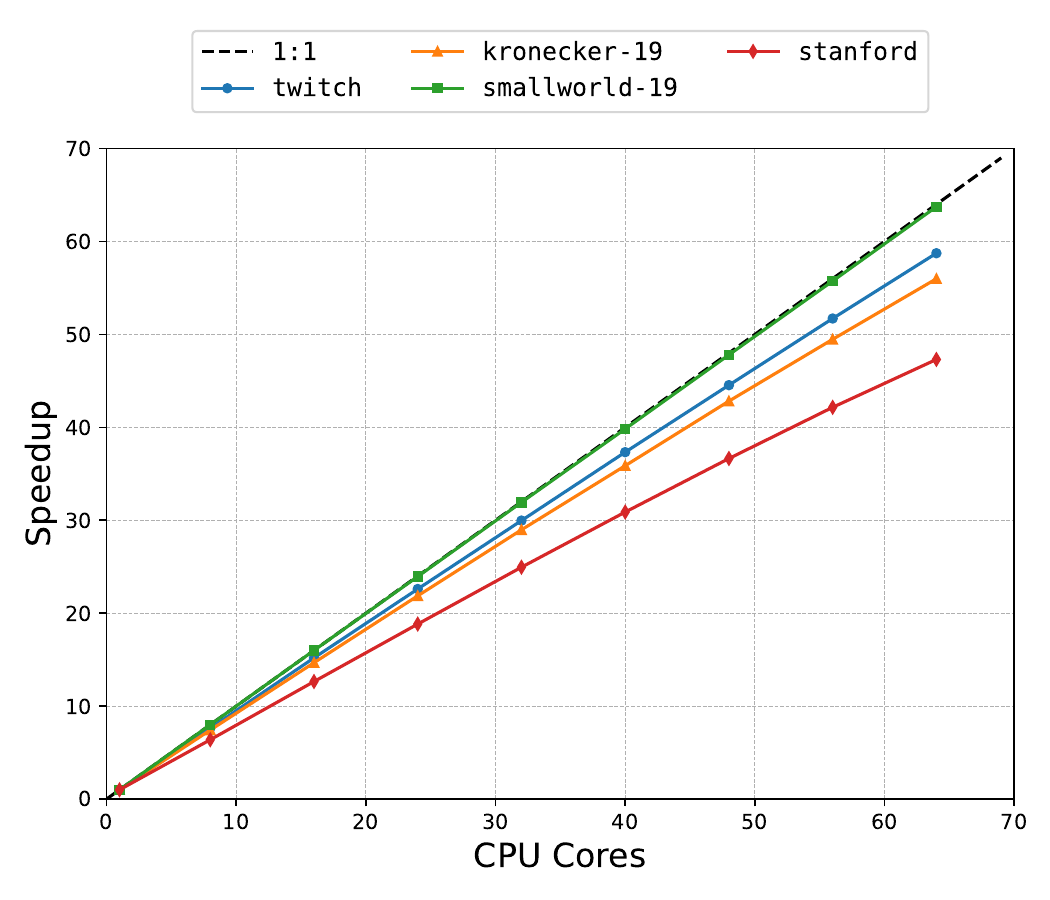}
        \caption{\texttt{RandFunmDiag} ($\gamma = 10^{-3}$).}
    \end{subfigure}
    \hfill
    \begin{subfigure}{0.49 \textwidth}
        \centering
        \includegraphics[width=\linewidth]{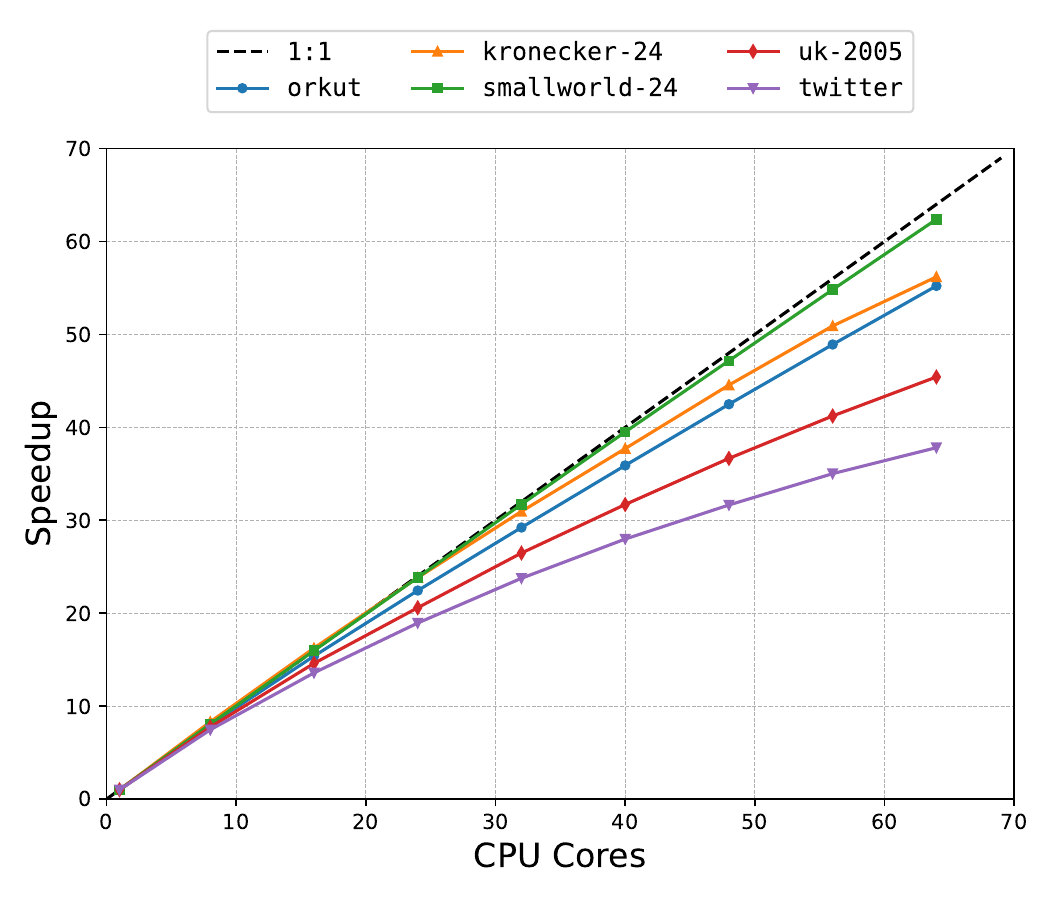}
        \caption{\texttt{RandFunmAction} ($\gamma = 10^{-5}$).}
    \end{subfigure}
    \caption{Strong scaling for $W_c = 10^{-8}$ and a fixed number of random walks.}
    \label{fig:strong_scaling}
\end{figure}

Parallelizing our randomized algorithm is fairly straightforward. Multiple rows of $\mat{Q}$ can be computed at the same time in the \texttt{RandFunmDiag} algorithm, yet the diagonal entries must be updated atomically to avoid data races. Likewise, the \texttt{RandFunmAction} algorithm can compute the vector $\vec{q}$ as well as the final product $\mat{A} \vec{q}$ completely in parallel.

Figure \ref{fig:strong_scaling} show the strong scaling for the parallel implementation of the \texttt{RandFunmDiag} and \texttt{RandFunmAction} algorithms, respectively. The scalability of both algorithms is excellent, attaining more than $85\%$ in efficiency for most networks when using $64$ cores. In particular, the parallel code was able to achieve near-perfect scaling for the \texttt{smallworld}  network due to its low degree per node and an almost uniform structure. This leads to a more efficient usage of the cache as well as an even distribution of load across the processors.

In most networks, the random walks are not distributed equally across the nodes, such that some rows of $\mat{Q}$ and entries of $\vec{q}$ take longer to compute than others. To solve this load imbalance, the program dynamically distributes the vector $\vec{q}$ and matrix $\mat{Q}$ over the CPU cores. This solution was very effective for most networks, improving significantly the performance of the program. Yet, the CPU may still be underutilized at the end of the code if the graph is very unbalanced. This is the case of directed graphs due to the symmetrization of the adjacent matrix as shown in (\ref{eq:digraph_sym}).

Another limiting factor is the latency and bandwidth of the main memory. Most operations with large and sparse matrices are well-known to be memory-bound as they cannot utilize the cache hierarchy effectively while requiring additional logic and memory accesses for handling the sparse storage format. In fact, the \texttt{kryl\_funm} algorithm shows no benefits when running in a multithreaded environment since it relies on sparse matrix-vector products and the majority of the code is written in MATLAB.  In contrast, our randomized algorithm only needs to compute two sparse matrix products: one at the beginning and another at the end of the algorithm. This still affects the scalability of the method when working with massive networks, such as \texttt{twitter} and \texttt{uk-2005}. Even under these conditions, the program was able to obtain significant speedups when using $64$ cores, achieving $60\%$ efficiency for \texttt{twitter} and $70\%$ for \texttt{uk-2005}.

\subsection{Katz Centrality} \label{sec:katz}

\begin{figure}[t]
 \centering
 \includegraphics[width=\linewidth]{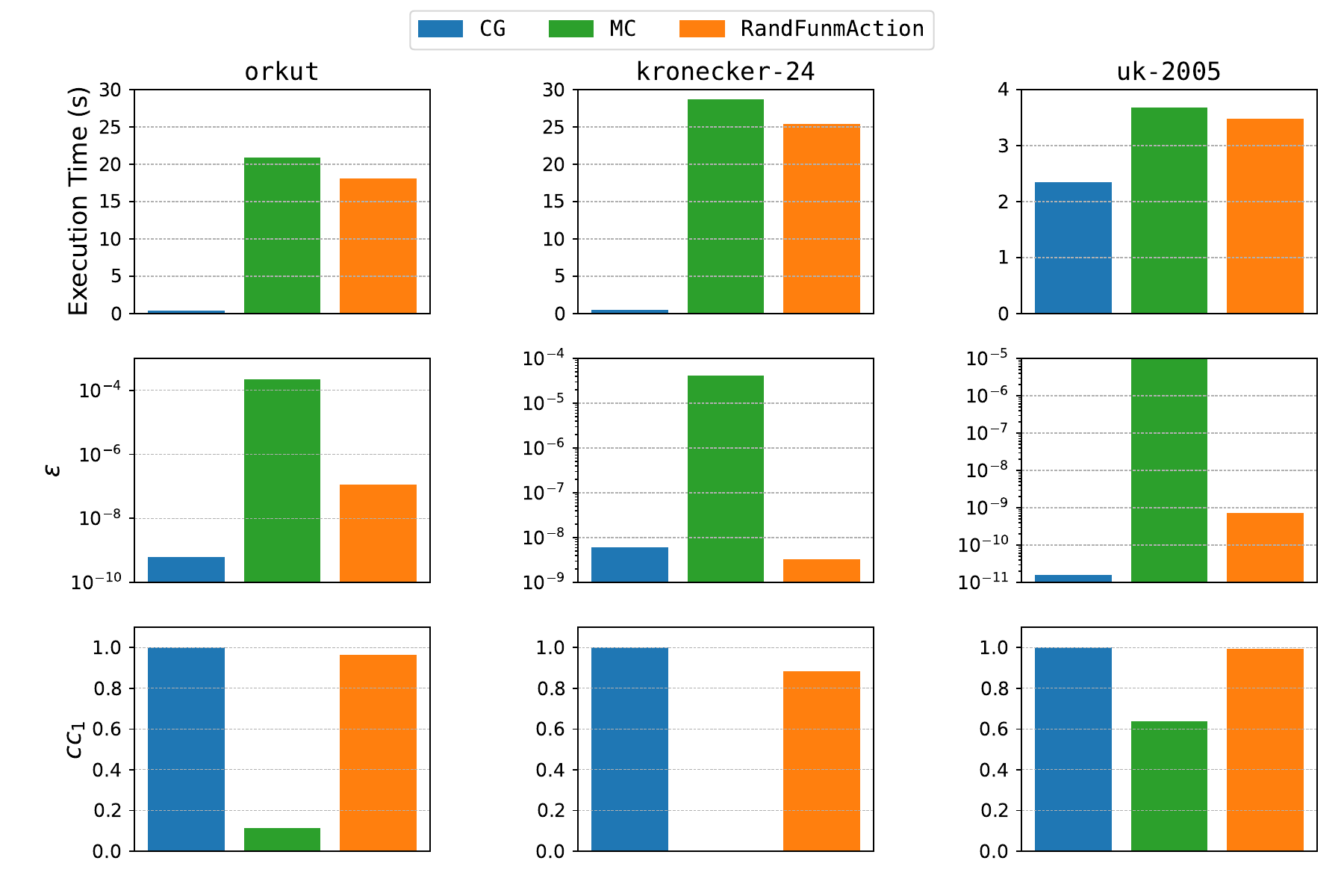}
 \caption{Comparison among the different algorithms when computing the katz-centrality for $\gamma = 0.85 \max_i{\sum_k |a_{ik}|}$ and $N_s = 10^9$. The execution time is measured using all $64$ cores. }
 \label{fig:katz-centrality}
\end{figure}

One of the most well-known centrality measures based on matrix functions is Katz's Centrality (KC) \cite{hubbell_inputoutput_1965,katz_new_1953}. It is defined as $(\mat{I} - \gamma\mat{A}) \vec{x} = \vec{1}$ with $KC(i) = x_i$ as the centrality score of the node $i$. Here, the $\gamma$ is called attenuation factor and should be between $0$ and $\rho(\mat{A})^{-1}$, where $\rho(\mat{A})$ is the spectral radius of $\mat{A}$ \cite{katz_new_1953}. Different information can be extracted from the network by changing the value of $\gamma$ \cite{benzi_limiting_2015}. For instance, if $\gamma$ tends to $\rho(\mat{A})^{-1}$, Katz's Centrality approximates the eigenvalue centrality \cite{bonacich_factoring_1972,bonacich_power_1987}. If $\gamma$ tends to $0$, then it converges to the degree centrality.

There are several ways to solve the linear system $(\mat{I} - \gamma\mat{A}) \vec{x} = \vec{1}$. Direct solvers, such as MUMPS \cite{amestoy_performance_2019,amestoy_fully_2001} or Intel Pardiso \cite{intel_mkl}, first compute the $\mat{LU}$ factorization or similar and then solve the linear system using backward/forward substitution. However, factorization is a very costly procedure, scaling with $O(n^3)$, while also requiring additional space to store matrices $\mat{L}$ and $\mat{U}$. On the other hand, sparse iterative solvers, such as Conjugate Gradient, GMRES, and BiCGSTAB, can converge very quickly to the solution especially provided a good preconditioning is available. Last, but not least, Monte Carlo methods solve the linear system as the truncated series
\begin{equation*}
(\mat{I} - \gamma \mat{A})^{-1}\vec{v} = \sum_{k = 0}^m{(\gamma \mat{A})^k\vec{v}}
\end{equation*}
with $\vec{v} = \vec{1}$. However, this series only converges when $\rho(\gamma \mat{A}) < 1$. Moreover, if $\rho(\gamma \mat{A})$ is near $1$, the convergence rate will be very slow, and thus, requires computing many terms of the expansion to reach a reasonable accuracy.

Fig. \ref{fig:katz-centrality} compares \texttt{RandFunmAction} against the original Monte Carlo method (\texttt{MC}) and a simple Conjugate Gradient algorithm (\texttt{CG}). Here, the error $\varepsilon$ of the \texttt{CG} is equal to residual norm $\|\vec{1} - \mat{A}\hat{\vec{x}}\|_2$, while for \texttt{RandFunmAction} and \texttt{MC}, it corresponds to $\ell_\infty$ error using the results from the \texttt{CG} as reference. Note that we avoid the costly computation of the eigenvalue by leveraging the Gershgorin's Theorem \cite{gershgorin_uber_1931}, i.e., $\rho(\gamma\mat{A}) \approx \max_i{\sum_k |\gamma a_{ik}|}$. Again, \texttt{RandFunmAction} is faster and more accurate than \texttt{MC} for the same number of random walks $N_s$, yet it still is not as good as the sparse iterative solver. Even without preconditioning, \texttt{CG} can converge extremely quickly to the solution due to the small value of $\gamma$. It is possible to enhance the performance of our methods by combining it with a Richardson iteration as shown in \cite{benzi_analysis_2017}. This is left for future work.

We want to emphasize that Monte Carlo methods are better suited to evaluate other matrix functions than the matrix inverse, whereas this is either too expensive or is not even possible with classical methods.

\section{Conclusion} \label{sec:conclusion}

This paper proposes a novel stochastic algorithm that randomly samples rows and columns of the matrix for approximating different powers of the power series expansion. It can evaluate any matrix function by using the corresponding coefficients of the series. The algorithm can be conveniently modified to compute either $f(\mat{A}) \vec{v}$ or the diagonal of $f(\mat{A})$ without the need to compute the entire matrix function. As a way to test the applicability of our method, we compute the subgraph centrality and total communicability of several large networks using the matrix exponential. Within this context, the stochastic algorithm has proven to be particularly effective, outperforming the competition. Our method also is highly scalable in a multithreaded environment, showing remarkable efficiency when using up to 64 cores.

In this paper, we primarily focus on the analysis of complex networks as it provided a very close relation with the method itself, but the algorithm can be applied to any scientific problem that can be expressed in terms of matrix functions, providing a quick way to estimate the solution of the problem with reasonable accuracy.

\section*{Competing interests}

All authors certify that they have no affiliations with or involvement in any organization or entity with any financial interest or non-financial interest in the subject matter or materials discussed in this manuscript.

\section*{Data availability}

The datasets generated during and/or analysed during the current study are available in the {\tt randfunm-networks} repository, available at \url{https://gitlab.com/moccalib/applications/randfunm-networks}.

\bibliographystyle{spmpsci}
\bibliography{bibliography}% common bib file

\end{document}